\newtheorem{theorem}{Theorem}
\newtheorem{corollary}{Corollary}
\newtheorem{definition}{Definition}
\newtheorem{lemma}{Lemma}
\newtheorem{proposition}{Proposition}
\newcommand{\R}{\mathbb{R}} 
\newcommand{\N}{\mathbb{N}}
\newcommand{\x}{\mathbf x}
\newcommand{\bu}{\textbf{u}}
\newcommand{\bv}{\textbf{v}}
\newcommand{\vecp}{\textbf{p}}
\newcommand{\T}{\mathcal{T}}
\newcommand{\mbbS}{{\mathbb S}}
\newcommand{\bfS}{\mathbf S}
\newcommand{\MCP}{\mathbb{H}}
\newcommand{\g}{g} 
\newcommand{\States}{\mathbb{S}}
\newcommand{\cost}{\mathcal{\ell}}
\newcommand{\lp}{\left(}
\newcommand{\rp}{\right)}
\newcommand{\la}{\left\{}
\newcommand{\ra}{\right\}}
\begin{document}
\title{Better Algorithms for Constructing Minimum Cost Markov Chains and AIFV Codes} 


\author{%
  \IEEEauthorblockN{Mordecai J.~Golin}
  \IEEEauthorblockA{University of Massachusetts, Amherst\\
                    Email: mgolin@umass.edu}
  \and
  \IEEEauthorblockN{Reza Hosseini Dolatabadi and Arian Zamani}
  \IEEEauthorblockA{Department of Computer Engineering\\ 
                    Sharif University of Technology, Tehran, Iran\\
                    Emails: \{reza.dolatabadi256\}, \{arian.zamani243\}@sharif.edu}

}


\maketitle


\begin{abstract}
Almost Instantaneous Fixed to Variable (AIFV) coding is  a relatively new method of lossless coding that, unlike Huffman  coding,  uses more than one coding tree.  
The problem of constructing optimal AIFV codes is 
a special case of that of constructing minimum cost Markov Chains.  
This paper provides the first complete proof of correctness for the previously known iterative algorithm for constructing such Markov chains.  

A recent  work  describes how to efficiently solve the Markov Chain problem by first constructing  a {\em Markov Chain Polytope} and then running the 
 Ellipsoid algorithm for linear programming on it.
This  paper's second result is that,  in the  AIFV case, a special property of the polytope   instead permits  solving the corresponding linear program using simple binary search. 
\end{abstract}

\section{Introduction}\label{sec: intro}

In what follows $m,n$ are positive integers, $[m]=\{0,1\,\ldots,m-1\}$ and $\langle n \rangle = \la1,2,\dots,n\ra$.

Consider  a stationary memoryless source with alphabet   $\Sigma=\{\sigma_1, \sigma_2, \dots, \sigma_n\}$ in which symbol
$\sigma_i$ is generated with  probability $p_i$. Binary Huffman codes encode each $\sigma_i$ as a binary string. Huffman codes are usually represented as a tree, with the codewords being leaves of the tree.

$m$-ary Almost Instantaneous Fixed to Variable  (AIFV-$m$) coding   
\cite{iterative_3,iterative_4,aifv_mr,aifv_m,dp_2,iterative_m,iterative_2}
is a relatively new method of lossless coding that encodes using $m$ different coding trees (switching between the trees using a complicated rule).  Minimum cost AIFV-$m$ codes are interesting because they can approach Shannon Entropy closer than Huffman coding. 

The initial method for constructing minimum cost AIFV-$m$ codes was an iterative ("fastest descent") one developed 
in   \cite{iterative_3,iterative_4,iterative_m,iterative_2}.  As noted in \cite{iterative_4}, this algorithm actually solved the more general problem of constructing a minimum-cost $m$-state Markov chain. This algorithm was therefore also able to be used to solve the problem of finding better parsing dictionaries using multiple parse trees
\cite{iwata2021aivf}  as well as better lossless codes for finite channel coding  \cite{iwata2022joint}, two other problems that fit into the 
minimum-cost  Markov Chain framework.

The sequel is framed in terms of the general Markov Chain problem and not the specific  AIFV one. (For completeness, the appendix provides a complete description of the motivating  AIFV problem.)

The iterative algorithm referenced above
ran in 
exponential time. 
\cite{golin2020polynomial} described a binary-search algorithm running in $O(n^3 b)$ 
time for solving the AIFV-$2$ coding problem; this corresponded to constructing  a 
minimun cost $2$-state Markov chain. Recall that $n$ is the size of the source alphabet.  
$b$ was the maximum number of bits required to 
represent 
any of the $p_i.$
 \cite{golinaifv-m} recently showed how to transform the general $m$-state Markov chain problem into a problem on  corresponding {\em Markov Chain Polytope (MCP)}
in which the highest point on the MCP corresponds somehow to a minimum-cost Markov Chain. This transforms the problem into a linear programming one 
which, even though the MCP has exponential size,  can be efficiently solved using the Ellipsoid algorithm \cite{ellipsoid}. If the MCP has a polynomial time separation oracle, which the AIFV MCP does, the problem could then be solved in polynomial time.  A major weakness of this last result is that it is purely theoretical, since the Ellipsoid algorithm is not very efficient in practice.

This paper contains two separate results.

The first proves the correctness of the iterative algorithm given in \cite{iterative_3,iterative_4,iterative_m,iterative_2} and rewritten as Algorithm \ref {alg: iterative} below. The existing proofs only showed that if the algorithm terminates, it terminates at a correct  (min-cost) solution.  That still left open the possibility that the algorithm might infinitely loop.
We show that the algorithm always terminates for any instance of a Minimum Cost Markov Chain problem.

The second result is a much simpler algorithm for solving the AIFV-$m$ problem.  As intermediate steps in its analysis,  \cite{golinaifv-m} derived an additional property of the specific MCP associated with the AIFV-$m$ problem (Lemma  \ref {lemma:poncare miranda} below).  We show that this implies that linear programming on this polytope can be solved using easily implementable simple binary search rather than requiring the more complicated Ellipsoid algorithm. 

\subsection{The Minimum Cost Markov Chain Problem} \label{sec: Minimum Cost Markov Chain Problem}

Recall that a Markov Chain $ \bfS =(S_0,\ldots,S_{m-1})$  is 
specified by knowing the {\em transition probabilities}
$\{q_j(S_k)\}_{j \in [m]}$ for each 
$k \in [m].$ 
These satisfy $\forall {j \in [m]},\, q_j(S_k) \ge 0$ and 
$\sum_{j \in [m]} q_j(S_k)=1.$  If the Markov chain is {\em ergodic},
$\bfS$ has a unique stationary distribution 
$\pi(\bfS) =(\pi_0(\bfS),\ldots,\pi_{m-1}(\bfS))$.

Our problem is on Markov Chains with {\em rewards}.  That is, each state $S_k$ has a real nonnegative {\em reward} or {\em cost} $\cost (S_k).$

If $\pi(\bfS)$ exists, e.g., if 
$\bfS$
is ergodic then its {\em gain} 
\cite{gallager2011discrete}  or (average steady state) cost is 
$\mbox{\rm cost}(\bfS)=\sum_{k\in [m]} \cost(S_k) \cdot \pi_k(\bfS).$

Our problem is to find the chain $\bfS$ with minimum cost among a specifically defined set of permissible Markov Chains.  More explicitly:
\begin{definition}   Fix 
$m>1.$ 
\begin{enumerate}
%
\item $\forall k \in [m],\,$ let $\mbbS_k$ be some  finite given set of ``permissible state $k$'s'', satisfying that $\forall  S_k \in \mbbS_k,\,  q_0(S_k) >0.$  
\item A Markov Chain $ \bfS =(S_0,\ldots,S_{m-1})$ is  {\em permissible}  if $\forall k \in [m],\,  S_k \in \mbbS_k.$   
\item the set of permissible Markov chains is
$\mbbS = \bigtimes_{k=0}^{m-1} \mbbS_k= 
\left\{
(S_0,\ldots,S_{m-1}) \mid \forall k \in [m],\, S_k \in \mbbS_k
\right\}$ 
\end{enumerate}
The Minimum Cost Markov Chain (MCMC)  Problem is to find a $\bfS \in \mbbS$ satisfying
$\mbox{\rm cost}(\bfS) = \min_{\bfS'\in \mbbS}\mbox{\rm cost}(\bfS').$
\end{definition}

The specific structure of the $\mbbS_k$ is different from problem to problem. 
For intuition, 
in the original motivating example of AIFV-$m$ coding, AIFV-$m$ codes were $m$-tuples of coding trees.  The $k$'th tree in such a code could be any permissible type-$k$ tree (independent of the other trees in the tuple).  The $q_j(S_k)$ were the probabilities of encoding a source symbol with a type-$j$ tree immediately after encoding the previous source symbol with a specific type-$k$ tree. The average number of bits used to encode a source symbol turned out to be exactly the cost of the associated Markov Chain.  For more information on this correspondence please see the appendix.

We also note that condition (1) in the definition implies that 
 $\bfS=(S_0,\ldots,S_{m-1})$ is an ergodic {\em unichain},  with one aperiodic recurrent class  (containing $S_0$) and, possibly, some transient states. $\bfS$ therefore has a unique stationary distribution 
$\pi(\bfS)$ so $\mbox{\rm cost}(\bfS)$ is well defined for all 
$\bfS \in \mbbS.$

\subsection{The Markov Chain Polytope}
 The next set of definitions are from  \cite{golinaifv-m}
 which  showed how to transform the 
 MCMC 
 problem into a linear programming one by first transforming permissible states into $m$-dimensional hyperplanes and then working with the polytope defined by their lower envelope.



\begin{definition} \label{hp def} Let $k \in [m].$ In what follows, $\x$ will always satisfy 
        $\x = \left[x_1,x_2\dots,x_{m-1}\right]^T \in \mathbb{R}^{m-1}$.

    \begin{itemize}
        \item Define $f_k: \mathbb{R}^{m-1} \times \States_k \rightarrow \mathbb{R}$ as follows:
        \[
            f_0 \lp \x,S_0 \rp  = \cost \lp S_0 \rp  + \sum_{j=1}^{m-1}q_j \lp S_0 \rp x_j
        \]
        
        \[
            \forall k > 0: f_k \lp \x,S_k \rp  = \cost \lp S_k \rp  + \sum_{j=1}^{m-1}q_j \lp S_k \rp x_j - x_k
        \]
        \item Define $\g_k : \mathbb{R}^{m-1} \rightarrow \mathbb{R}$ as follows:

        \[
            g_k \lp \x \rp  = \min_{S_k \in \States_k} f_k \lp \x,S_k \rp 
        \]
        \[
            S_k \lp \x \rp  = \arg\min_{S_k \in \States_k} f_k \lp \x,S_k \rp 
        \]
        \[
            \bfS \lp \x \rp  = \bigl( S_0 \lp \x \rp , \ldots, S_{m-1} \lp \x \rp  \bigr)
        \]
        We call $g_k(x)$ the \textit{lower-envelope of type $k$}.
        \item Define $h: \mathbb{R}^{m-1} \rightarrow \mathbb{R}$ as 
        \[
            h \lp \x \rp  = \min_{k} g_k \lp \x \rp 
        \]
The {\em Markov Chain Polytope} corresponding to $\mbbS$ is 
$$ \MCP = \left\{(\x,y) \in \mathbb{R}^{m} \mid  0 \le y \le h(\x)\right\}.$$
The {\em height} of $\MCP$ is ${\rm height}(\MCP)=\max_{(\x,y)\in \MCP} y.$
%
 %
    \end{itemize}

\end{definition}

\cite{golinaifv-m}
 described the relationship of this polytope to the 
 MCMC
 problem.

\begin{proposition}[Lemma 3.1 in ~\cite{golinaifv-m}] \label{prop: big paper}
Let $ \bfS =(S_0,\ldots,S_{m-1})$ be any permissible   Markov chain   and 
     $f_k \lp \x,S_i \rp$, $k \in [m],$ its  associated  hyperplanes. Then these $m$ hyperplanes intersect  at a unique point 
    $(\x,y) \in \mathbb{R}^{m}$. Furthermore,  $ y \ge {\rm height}(\MCP).$
 
    Such a point  $(\x,y)$  will be called the multi-typed intersection corresponding to $S$. 
 %
 %
\end{proposition}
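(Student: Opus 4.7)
My plan splits into two parts, one for each claim of the proposition.

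For existence and uniqueness of the intersection point, the first step is to recognize that imposing $y = f_k(\x, S_k)$ for every $k$, together with the harmless convention $x_0 = 0$ (harmless because only $x_1,\dots,x_{m-1}$ appear in the definitions of the $f_k$), turns the $m$ intersection conditions into the classical average-reward Poisson system for $\bfS$ with state rewards $\cost(S_k)$:
\[
 y + x_k = \cost(S_k) + \sum_{j=0}^{m-1} q_j(S_k)\, x_j \qquad \text{for every } k \in [m].
\]
The hypothesis $q_0(S_k) > 0$ for every $k$ forces $\bfS$ to be an ergodic unichain, so the standard theory of Markov chains with rewards gives a unique gain $y = \mbox{cost}(\bfS)$ and a bias vector that is unique up to an additive constant; normalizing $x_0 = 0$ pins down a single solution, which is the claimed intersection point.

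For the bound $y \ge {\rm height}(\MCP)$, I would fix an arbitrary $\x \in \mathbb{R}^{m-1}$ and start from the pointwise inequalities $g_k(\x) \le f_k(\x, S_k)$ that follow directly from the definition of $g_k$. The key move is to average these $m$ inequalities against the stationary distribution $\pi(\bfS) = (\pi_0, \ldots, \pi_{m-1})$:
\[
 \sum_{k \in [m]} \pi_k\, g_k(\x) \le \sum_{k \in [m]} \pi_k \cost(S_k) + \sum_{j=1}^{m-1} x_j \sum_{k \in [m]} \pi_k q_j(S_k) - \sum_{k=1}^{m-1} \pi_k x_k.
\]
The first term on the right is exactly $\mbox{cost}(\bfS) = y$; the stationarity identity $\sum_k \pi_k q_j(S_k) = \pi_j$ turns the middle double sum into $\sum_{j=1}^{m-1} \pi_j x_j$, which cancels the final sum. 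Hence $\sum_k \pi_k g_k(\x) \le y$, and since $h(\x) = \min_k g_k(\x)$ is dominated by any convex combination of the $g_k(\x)$, this yields $h(\x) \le y$ for every $\x$, i.e., ${\rm height}(\MCP) \le y$.

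The first part is essentially a dictionary translation between the intersection equations and well-established Markov-chain-with-rewards theory, so the real work lies in the second part. The main obstacle is spotting that the stationary distribution is the correct averaging weight: once one sees that $\sum_k \pi_k q_j(S_k) = \pi_j$ is exactly what makes the $x_j$ terms cancel on the right-hand side, the bound involving ${\rm height}(\MCP)$ drops out with essentially no further calculation.
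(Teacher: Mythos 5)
The paper does not actually prove this proposition; it imports it verbatim as Lemma~3.1 of the cited reference \cite{golinaifv-m}. So there is no in-paper proof to compare against, and I judge your argument on its own merits: it is correct, and it is the natural argument for both halves. Your reduction to the Poisson (evaluation) system is sound: adjoining the convention $x_0 = 0$ turns the $m$ conditions $y = f_k(\x, S_k)$ into $y + x_k = \cost(S_k) + \sum_{j\in[m]} q_j(S_k)\, x_j$, and the hypothesis $q_0(S_k) > 0$ for all $k$ forces a single aperiodic recurrent class containing state $0$, which is exactly what makes the gain unique and the bias unique up to an additive constant; fixing $x_0=0$ then pins down the unique intersection point. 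The second half is also correct: averaging $g_k(\x) \le f_k(\x, S_k)$ against $\pi(\bfS)$, using $\sum_k \pi_k q_j(S_k) = \pi_j$, gives $\sum_k \pi_k g_k(\x) = \mathrm{cost}(\bfS) = y$, and since $h(\x) = \min_k g_k(\x)$ is bounded above by any $\pi$-convex combination of the $g_k(\x)$, one gets $h(\x) \le y$ for every $\x$, hence ${\rm height}(\MCP) \le y$. The one small point worth spelling out if you write this up formally is the justification of the standard unichain gain/bias result (e.g., a citation to Puterman or Gallager), since that is where all the existence/uniqueness content lives in part one.
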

Note that the proposition implies that if $(\x,y)$  is the  multi-typed intersection corresponding to $ \bfS =(S_0,\ldots,S_{m-1}),$
 $${\rm cost}(\bfS)= f_0(\x,S_0)=f_1(\x,S_1)=\cdots=f_{m-1}(\x,S_{m-1}).$$
It  also  immediately implies that the lowest ($y$-coordinate) multi-typed intersection point corresponds to the cheapest Markov chain and its height is at least 
$\rm {height}(\MCP)$. Thus
\begin{corollary} \label{cor : highest = cheapest}
   If  some $(\x,y) \in \MCP$ is a multi-typed intersection point $(\x,y)$, then 
 $(\x,y)$ is 
    a highest point on 
 $\MCP,$ i.e., $ y = {\rm height}(\MCP).$
Furthermore,  $\bfS(\x)$ is 
    a  min cost Markov chain. 
\end{corollary}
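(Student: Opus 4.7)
The plan is to extract both conclusions from two matching inequalities on $y$. By Proposition~\ref{prop: big paper}, the multi-typed intersection of any permissible chain has $y$-coordinate at least $\mathrm{height}(\MCP)$, so in particular our point satisfies $y \ge \mathrm{height}(\MCP)$. Conversely, the hypothesis $(\x,y) \in \MCP$ gives $y \le h(\x) \le \max_{\x'} h(\x') = \mathrm{height}(\MCP)$. Combining these yields $y = \mathrm{height}(\MCP)$, which is the first claim.

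For the second claim, I would argue next that the argmin chain $\bfS(\x) = (S_0(\x),\ldots,S_{m-1}(\x))$ from Definition~\ref{hp def} shares the same multi-typed intersection $(\x,y)$ and therefore attains cost $y$. Let $\bfS = (S_0,\ldots,S_{m-1})$ be any permissible chain whose hyperplanes originally intersect at $(\x,y)$ (which exists by hypothesis). For each $k$, the inequality $y \le h(\x) \le g_k(\x) = f_k(\x, S_k(\x))$ bounds $f_k(\x, S_k(\x))$ below by $y$, while the argmin property $f_k(\x, S_k(\x)) \le f_k(\x, S_k) = y$ bounds it above. This sandwich forces $f_k(\x, S_k(\x)) = y$ for every $k$, so the hyperplanes associated with $\bfS(\x)$ all pass through $(\x,y)$. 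By the uniqueness portion of Proposition~\ref{prop: big paper} applied to $\bfS(\x)$, this point is exactly its multi-typed intersection, so $\mathrm{cost}(\bfS(\x)) = y$.

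To finish, I would invoke Proposition~\ref{prop: big paper} one more time on an arbitrary competing permissible chain $\bfS' \in \mbbS$: if $(\x',y')$ denotes its multi-typed intersection, then $\mathrm{cost}(\bfS') = y' \ge \mathrm{height}(\MCP) = y = \mathrm{cost}(\bfS(\x))$, so $\bfS(\x)$ is minimum cost. The whole argument is essentially bookkeeping glued together by Proposition~\ref{prop: big paper}; the only mild subtlety is that the chain $\bfS$ whose hyperplanes intersect at $(\x,y)$ need not equal the argmin chain $\bfS(\x)$, but the sandwich above shows that both have the same cost $y$, which is necessarily the minimum.
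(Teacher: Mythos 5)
Your proof is correct and follows the same route as the paper: both directions of $y = \mathrm{height}(\MCP)$ come from Proposition~\ref{prop: big paper} (giving $y \ge \mathrm{height}(\MCP)$ for any multi-typed intersection) and from membership in $\MCP$ (giving $y \le h(\x) \le \mathrm{height}(\MCP)$), and the minimality then falls out by applying Proposition~\ref{prop: big paper} again to an arbitrary competitor. The one place where you are more careful than the paper's terse remark preceding the corollary is the sandwich showing $f_k(\x, S_k(\x)) = y$ for every $k$. The corollary's conclusion is stated about the argmin chain $\bfS(\x)$, whereas the hypothesis only says that \emph{some} permissible chain $\bfS$ has its multi-typed intersection at $(\x,y)$; these need not be the same tuple. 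The paper glosses over this by speaking of ``the corresponding chain'', but your inequality chain $y \le h(\x) \le g_k(\x) = f_k(\x, S_k(\x)) \le f_k(\x, S_k) = y$, combined with the uniqueness in Proposition~\ref{prop: big paper}, is exactly the right way to show $\bfS(\x)$'s own multi-typed intersection is also $(\x,y)$ and hence that $\mathrm{cost}(\bfS(\x)) = y$. So this is the same approach with a small but genuine gap correctly filled in.
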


According to the above statements, it is sufficient to prove that $H$ contains a multi-typed intersection point. Then finding the cheapest Markov chain will be equivalent to finding the highest point of the polytope $H$.  We will use this fact later.

\section{The Iterative Algorithm}\label{sec: iterative algorithm}

\cite{iterative_3,iterative_4,iterative_m,iterative_2} introduced the following iterative algorithm for solving the AIFV-$m$ problem. As noted by \cite{iterative_4}, it actually solves the generic MCMC Problem.


\begin{algorithm}[H]
\caption{Iterative Algorithm to Find The Optimal $\T \in \States$}\label{alg:dp}
\label{alg: iterative}
\begin{algorithmic}[1]
\Require $\States$ and $\bfS(\x)$ defined in Section \ref{sec: intro}
\algrenewcommand\algorithmicrequire{\textbf{Initialization:}}
\Require Let $\vecp_0 \in \mathbb{R}^{m-1}$ be the projection of any  arbitrary multi-typed intersection point.
\While{$\vecp_i \neq \vecp_{i-1}$ }
\State $i \leftarrow i+1$
\State  $\vecp_{i} \leftarrow $ the projection of the multi-typed intersection point  corresponding to $\bfS(\vecp_{i-1}).$ 
\label{algline: intersection part}    
\EndWhile \\
\Return $\bfS(\vecp_i)$
\end{algorithmic}
\end{algorithm}
In the algorithm, 
the  projection of 
$\x' = [x_1,x_2,\dots,x_m]^T \in \R^{m}$ on $\R^{m-1}$ is 
$\x = [x_1,x_2,\dots,x_{m-1}]^T \in \R^{m-1}$.

The algorithm starts by choosing any arbitrary $\bfS =(S_0,\ldots,S_{m-1})\in \mbbS,$  setting $\x'_0=(\x_0,y_0)$ to be the 
multi-typed intersection point corresponding to $S$ and the starting point as $\vecp_0=\x_0$, i.e., the projection of
$\x'_0$  on $\R^{m-1}$. 


Note that the algorithm strongly depends upon Proposition \ref{prop: big paper} to ensure that a unique multi-typed intersection point corresponding to any $\bfS \in \mbbS$ always exists.


The discussions in \cite{iterative_3,iterative_4,iterative_m,iterative_2} don't actually prove that the algorithm is always correct. 
They rather prove that if 
it terminates,
then it outputs a correct solution.  
They did not prove though, that 
it always terminates.
It could end up looping infinitely.  Our first new result will be a proof of termination and correctness in all cases.



The next section proves the correctness of  Algorithm \ref{alg: iterative}.
For completeness,  we note that  Algorithm \ref{alg: iterative} as presented is actually not  the iterative algorithm 
presented in \cite{iterative_3,iterative_4,iterative_m,iterative_2}.  But, it is shown in 
\cite{golinaifv-m} that Algorithm \ref{alg: iterative}  is identical to the earlier one(s) after a change of variables.  This change of variables vastly simplifies the proof of correctness

\section{Proof of Correctness of Algorithm \ref{alg: iterative}}\label{sec: statements}
Let 
\[
\textbf{u} = \left[ u_1,u_2,\dots,u_{n} \right]^T, \textbf{v} = \left[ v_1,v_2,\dots,v_{n} \right]^T \in \R^{n}
\]
be two vectors. Define $\textbf{u} \preceq \textbf{v}$ if
$
    \forall \, i \in \langle n \rangle: \; u_i \leq v_i.
    $
The proof requires partitioning   $\mathbb{R}^{m-1}$ into $m$ cones.
\begin{definition}
    Let $k \in [m].$ Define:
       \[
        C_0 = \la \textbf{u}  \in \mathbb{R}^{m-1} \mid \textbf{u} \preceq  0  \ra,
    \]
    \begin{align}
        \forall k > 0 : C_k = \Big\{ \textbf{u} \in 
        \mathbb{R}^{m-1} \mid u_k > 0 \;\; \text{and} \;\; \forall j: u_k \geq u_j \Big\}. \nonumber
    \end{align}
\end{definition}

\begin{lemma} \label{lem : main lemma}
       Fix $i \in [m]$ and let  $S_i \in \bfS_i.$
       Let  $\bu,\bv \in \mathbb{R}^{m-1}$ with $\bu\not=\bv$   and
       $\bv-\bu \in C_i.$ Then
%
    \[
        f_i \lp \bv,S_i \rp  \leq f_i \lp \bu,S_i \rp 
    \]
    Moreover, if $i \neq 0$,  
        $f_i \lp \bv,S_i \rp  < f_i \lp \bu,S_i \rp.$
    
\end{lemma}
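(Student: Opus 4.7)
The plan is to unwind the definition of $C_i$ into concrete inequalities on $\deltavec := \bv - \bu$, substitute into the explicit formula for $f_i$, and observe that the difference telescopes cleanly because each $f_i$ is affine in $\x$. I would handle $i = 0$ separately from $i > 0$, since the cone $C_0$ and the formula for $f_0$ have different shapes (no $-x_k$ term, and all coordinates nonpositive rather than one dominant positive coordinate).

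For $i = 0$, the hypothesis $\deltavec \in C_0$ gives $\delta_j \le 0$ for every $j$. Since each $q_j(S_0) \ge 0$, direct substitution into
\[
f_0(\bv, S_0) - f_0(\bu, S_0) \;=\; \sum_{j=1}^{m-1} q_j(S_0)\,\delta_j
\]
immediately shows the difference is $\le 0$, which is exactly the claimed weak inequality.

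For $i > 0$, the hypothesis $\deltavec \in C_i$ gives $\delta_i > 0$ together with $\delta_j \le \delta_i$ for every $j$. I plan to compute
\[
f_i(\bv, S_i) - f_i(\bu, S_i) \;=\; \sum_{j=1}^{m-1} q_j(S_i)\,\delta_j \;-\;\delta_i,
\]
bound each summand by $q_j(S_i)\,\delta_j \le q_j(S_i)\,\delta_i$ using nonnegativity of the transition probabilities, and then use $\sum_{j \in [m]} q_j(S_i) = 1$ to collapse the bound to $(1 - q_0(S_i))\,\delta_i - \delta_i = -q_0(S_i)\,\delta_i$.

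The one nontrivial point --- and really the whole reason strictness holds --- is extracting the strict inequality at the very end. This is where I would invoke part (1) of the MCMC definition, which imposes $q_0(S_k) > 0$ for every permissible state. Together with $\delta_i > 0$ this forces $-q_0(S_i)\,\delta_i < 0$. So the main obstacle, such as it is, is simply recognizing that the $q_0 > 0$ hypothesis (which is also what guarantees $\bfS$ is a unichain in the first place) is exactly the ingredient that fuels strict monotonicity along the cones $C_i$ for $i \neq 0$.
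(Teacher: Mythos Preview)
Your proof is correct and follows the same overall strategy as the paper: compute the affine difference $f_i(\bv,S_i)-f_i(\bu,S_i)$, use the cone inequalities on $\deltavec$, and for $i>0$ extract strictness from the standing hypothesis $q_0(S_i)>0$.

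For $i>0$ your execution is actually cleaner than the paper's. The paper introduces an auxiliary index set $I=\{j:\delta_j>0\}$ and bounds the three pieces $j\in I\setminus\{i\}$, $j\notin I$, and $j=i$ separately before reassembling. You instead apply the single uniform bound $q_j(S_i)\,\delta_j\le q_j(S_i)\,\delta_i$ (valid since $q_j\ge 0$ and $\delta_j\le\delta_i$ for all $j$ by the definition of $C_i$) and collapse the sum in one step via $\sum_{j=1}^{m-1}q_j(S_i)=1-q_0(S_i)$. Both routes land on the same final quantity $-q_0(S_i)\,\delta_i<0$; yours just gets there without the case split.
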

\begin{proof}
We first analyze the case $i=0,$ i.e., 
 $\bv-\bu \in C_0$,  or, equivalently, $\bv \preceq  \bu$. By Definition \ref{hp def}:
\[
    f_0 \lp \bu,S_0 \rp  = \cost \lp S_0 \rp  + \sum_{j=1}^{m-1}q_j \lp S_0 \rp u_j
\]
\[
    f_0 \lp \bv,S_0 \rp  = \cost \lp S_0 \rp  + \sum_{j=1}^{m-1}q_j \lp S_0 \rp v_j
\]
Therefore,
\begin{equation} \label{eq:1}
    f_0 \lp \bv,S_0 \rp  - f_0 \lp \bu,S_0 \rp  = \sum_{j=1}^{m-1}q_j \lp S_0 \rp  \lp v_j - u_j \rp  
\end{equation}

By assumption, $\bv \preceq  \bu$, so for all $1 \leq j < m$,  $v_j - u_j \leq 0$. Hence, by Equation (\ref{eq:1}),
\[
    f_0 \lp \bv,S_0 \rp  - f_0 \lp \bu,S_0 \rp  \leq 0.
\]
We now analyze the cases in which $i \not=0,$ i.e., 
 $\bv-\bu \in C_i$ for  $i \in  \langle m-1 \rangle$. Define the set of indices $I \subseteq \langle m-1 \rangle$ as 
\begin{equation}\label{eq:2}
    I =  \la j \in [m-1] \mid v_j > u_j  \ra.
\end{equation}
By assumption $i \in I$,  so $I \neq \varnothing$. By Definition \ref{hp def},
$$ \small   f_i \lp \bu,S_i \rp  - f_i \lp \bv,S_i \rp  
    =\sum_{j \neq i} q_j \lp S_i \rp  \lp u_j-v_j \rp \nonumber + \lp q_i \lp S_i \rp  - 1 \rp  \lp u_i - v_i \rp.
$$

From Equation  (\ref{eq:2}),
\begin{align}\label{eq:4}
    f_i \lp \bu,S_i \rp  - f_i \lp \bv,S_i \rp  &= \sum_{j \in I \setminus \la i \ra } q_j \lp S_i \rp  \lp u_j-v_j \rp \nonumber  \\
    &+ \sum_{k \notin I} q_k \lp S_i \rp  \lp u_k-v_k \rp  \; \nonumber \\
    &+ \;  \lp q_i \lp S_i \rp  - 1 \rp  \lp u_i - v_i \rp
\end{align}

By the definition of $C_i,$
\begin{equation} \label{eq:5}
    \sum_{j \in I \setminus \la i \ra } q_j \lp S_i \rp  \lp u_j-v_j \rp  \geq  \lp \sum_{j \in I \setminus \la i \ra } q_j \lp S_i \rp  \rp  \lp u_i - v_i \rp 
\end{equation}
and by the definition of $I$
\begin{equation} \label{eq:6}
    \sum_{k \notin I} q_k \lp S_i \rp  \lp u_k-v_k \rp  \geq 0.
\end{equation}
Since 
\[
    1-q_i \lp S_i \rp  =  \lp \sum_{j \in \langle m-1 \rangle \setminus \la i\ra }q_j \lp S_i \rp   \rp + q_0 \lp S_i \rp
\]
by Equations (\ref{eq:4}), (\ref{eq:5}) and (\ref{eq:6}) we can conclude:
\begin{align*}
    f_i \lp \bu,S_i \rp  - f_i \lp \bv,S_i \rp  \geq  \lp \sum_{j \in I \setminus \la i \ra } q_j \lp S_i \rp  \rp  \lp u_i - v_i \rp  + 0 \\
    -  \lp  \lp \sum_{j \in \langle m-1 \rangle  \setminus \{i\} }q_j \lp S_i \rp   \rp + q_0 \lp S_i \rp  \rp  \lp u_i - v_i \rp.
\end{align*}
Hence
\begin{align*}
    f_i \lp \bu,S_i \rp  &- f_i \lp \bv,S_i \rp  \\
    &\geq \lp  \lp \sum_{j \in \langle m-1 \rangle  \setminus I }q_j \lp S_i \rp   \rp + q_0 \lp S_i \rp  \rp  \lp v_i - u_i \rp\\
   &\geq q_0 \lp S_i \rp  \lp v_i - u_i \rp  > 0
\end{align*}
completing the proof.
\end{proof}

\begin{theorem} \label{prop: convergence proof}
     Algorithm \ref{alg: iterative} always terminates in finite time. 
     Furthermore, at termination, $\bfS(\vecp_i)$ is a minimum cost Markov Chain and the multi-typed intersection point associated with $\bfS(\vecp_i)$ is in $\MCP.$
\end{theorem}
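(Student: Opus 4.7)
The plan is to combine the monotonicity in Lemma \ref{lem : main lemma} with the finiteness of $\mbbS$ to rule out infinite loops, then to invoke Corollary \ref{cor : highest = cheapest} for optimality. Write $\tilde{\bfS}^{(i)}:=\bfS(\vecp_i)$ and let $y_i$ denote the $y$-coordinate of the multi-typed intersection of $\tilde{\bfS}^{(i-1)}$, so that $y_i=f_k(\vecp_i,\tilde S_k^{(i-1)})$ for every $k$ by Proposition \ref{prop: big paper}. One first observes that the cones $C_0,\ldots,C_{m-1}$ cover $\mathbb{R}^{m-1}$: any nonzero vector either has every coordinate nonpositive (hence lies in $C_0$) or has a strictly positive coordinate equal to its maximum (hence lies in the corresponding $C_k$).

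The first substantive step is to establish the weak monotonicity $y_{i+1}\le y_i$. Setting $\bu=\vecp_i$, $\bv=\vecp_{i+1}$ with $\bv\neq\bu$ and picking $j$ with $\bv-\bu\in C_j$, Lemma \ref{lem : main lemma} gives $y_{i+1}=f_j(\bv,\tilde S_j^{(i)})\le f_j(\bu,\tilde S_j^{(i)})=g_j(\bu)$, the last equality because $\tilde S_j^{(i)}$ minimises $f_j(\bu,\cdot)$; likewise $g_j(\bu)\le f_j(\bu,\tilde S_j^{(i-1)})=y_i$ by the multi-typed intersection identity at $\vecp_i$. The strict half of Lemma \ref{lem : main lemma} upgrades this to $y_{i+1}<y_i$ whenever $j\neq 0$.

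The main obstacle is the $j=0$ case, where only the weak inequality is available and so strict descent alone does not suffice for termination. I would handle it by contradiction via finite cycles: if the algorithm never terminates, then by finiteness of $\mbbS$ and determinism of $\bfS(\cdot)$ the sequence $(\tilde{\bfS}^{(i)})$ must be eventually periodic, and hence so is $(\vecp_i)$. Over a full period $y$ is weakly decreasing and returns to its starting value, so it is constant; by the strict half above, every jump in the period must satisfy $\vecp_{i+1}-\vecp_i\in C_0$, i.e., $\vecp_{i+1}\preceq\vecp_i$. Chaining these $\preceq$-relations around the period and combining with periodicity of $(\vecp_i)$ forces every $\vecp$ in the period to coincide, contradicting the fact that the loop-exit condition never triggered along a non-terminating run.

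At termination $\vecp_i=\vecp_{i-1}=:\vecp^*$, so the chain $\bfS(\vecp^*)=\tilde{\bfS}^{(i-1)}$ is simultaneously a minimiser of every $f_k(\vecp^*,\cdot)$ and has multi-typed intersection projecting to $\vecp^*$. Consequently $g_k(\vecp^*)=y_i$ for every $k$ and $h(\vecp^*)=y_i$; together with $(\vecp^*,y_i)\in\MCP$ and the lower bound $y_i\ge{\rm height}(\MCP)$ from Proposition \ref{prop: big paper}, this exhibits $(\vecp^*,y_i)$ as a highest point of $\MCP$, and Corollary \ref{cor : highest = cheapest} identifies the returned $\bfS(\vecp_i)$ as a minimum-cost Markov chain.
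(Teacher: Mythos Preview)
Your proposal is correct and follows essentially the same approach as the paper: both establish $y_{i+1}\le y_i$ via Lemma~\ref{lem : main lemma} (with strictness unless the difference lies in $C_0$), then combine finiteness of $\mbbS$ with the resulting $\preceq$-monotonicity to force stabilization, and finally invoke Corollary~\ref{cor : highest = cheapest}. The only cosmetic difference is that you package the termination step as a contradiction via eventual periodicity (which uses determinism of the tie-breaking in $\bfS(\cdot)$), whereas the paper argues directly that a weakly $\preceq$-decreasing sequence taking finitely many values must stabilize; the two are equivalent.
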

\begin{proof}
 Since the set $\mbbS$ of permissible Markov Chains is finite 
      the set $ \la \vecp_0, \vecp_1, \dots  \ra $ is finite. 
      
      Now set $c_0=y_0$. For  $i >0,$ define $c_i= \mbox{\rm cost}\lp \bfS\lp \vecp_{i-1}\rp \rp.$
 From Proposition \ref{prop: big paper} and the definition of the $\vecp_i$
    \begin{align*} \label{eq : def of costs}
        c_i &= f_0 \lp \vecp_i,S_0 \lp \vecp_{i-1} \rp  \rp  = f_1 \lp \vecp_i,S_1 \lp \vecp_{i-1} \rp  \rp  \\
        &= \dots = f_{m-1} \lp \vecp_i,S_{m-1} \lp \vecp_{i-1} \rp  \rp 
    \end{align*}
    
        We claim that for all $i > 1$, (i) $c_i \leq c_{i-1}$. Furthermore, (ii) if $c_i = c_{i-1}$,  $\vecp_i \preceq \vecp_{i-1}$.

    The $C_j$ partition   $\mathbb{R}^{m-1}$ so there exists some $j$ such that 
        $\vecp_i - \vecp_{i-1} \in C_j$. From  Lemma \ref{lem : main lemma},   
        \begin{equation} \label{eq: cone inequality}
            f_j \lp \vecp_{i-1},S_j \lp \vecp_{i-1} \rp  \rp  \geq f_j \lp \vecp_i,S_j \lp \vecp_{i-1} \rp  \rp  = c_i.
        \end{equation}
        
        From the definition of $S_i(x)$ (via $\arg\min$)  in Definition \ref{hp def}, 
        \[
            c_{i-1} =  f_j \lp \vecp_{i-1},S_j \lp \vecp_{i-2} \rp  \rp \geq  f_j \lp \vecp_{i-1},S_j \lp \vecp_{i-1} \rp  \rp.
        \]
        Combining with  inequality (\ref{eq: cone inequality}), yields (i),
        \begin{equation} \label{eq: claimed ineq}
           \forall i > 0,\quad  c_{i} \leq c_{i-1}.
        \end{equation}
        Moreover, by  Lemma \ref{lem : main lemma} we know if $j > 0$, the inequality in (\ref{eq: cone inequality}) is strict; thus, inequality (\ref{eq: claimed ineq}) is also strict. So, if $c_i = c_{i-1}$, then   $j=0$ so  $\vecp_i \preceq \vecp_{i-1}$ and (ii) is proved.
        

Since  the set $ \la \vecp_0, \vecp_1, \dots  \ra $ is finite,  $ \la c_0, c_1, \dots  \ra $ is also finite.
Inequality (\ref{eq: claimed ineq}) thus implies that the $c_i$ converge in finite time, i.e., there exists $N \in \N$ s.t. for all $i > N$,  $c_{i}=c_{i-1}.$

 Claim (ii) then implies that for all $i > N$,  $\vecp_i \preceq \vecp_{i-1}$. The finiteness of the set $ \la \vecp_0, \vecp_1, \dots  \ra $,  then immediately implies that,  for all sufficiently large $n \in \N$, $\vecp_n = \vecp_{n+1}$, i.e., the algorithm terminates.




   Note that if $\vecp_n = \vecp_{n+1}$, then $c_n = c_{n+1}$ so,
   $\forall k \in [m],$
   $$ c_n = c_{n+1}= f_k\lp \vecp_{n+1},S_k\lp \vecp_{n}\rp \rp
   =f_k\lp \vecp_{n},S_k\lp \vecp_{n}\rp \rp=
   g_k\lp \vecp_{n}\rp.
   $$
   Thus $ c_n = h \lp \vecp_n \rp$, so 
   $(\vecp_n,c_n)\in \MCP$ is a multi-typed intersection point. Corollary \ref{cor : highest = cheapest} then implies that  $\bfS(\vecp_n)$ is a minimum cost Markov Chain.
\end{proof}

Lemma \ref{lem : main lemma}
also provides more information about the structure of $\MCP.$ This will be useful later.

\begin{theorem} \label{prop: main proposition for uniqueness}
     The multi-typed intersection point found by Algorithm \ref {prop: convergence proof} is the unique multi-typed intersection point in  $\MCP.$ Furthermore, it is a vertex of $\MCP.$
\end{theorem}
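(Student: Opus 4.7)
The plan is to prove uniqueness via a cone-based argument that leverages the strict/non-strict dichotomy in Lemma~\ref{lem : main lemma}, and then to deduce the vertex property as a short linear-algebraic consequence of Proposition~\ref{prop: big paper}.

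For uniqueness I would argue by contradiction. Suppose $(\bu,y_u)$ and $(\bv,y_v)$ are two distinct multi-typed intersection points in $\MCP$. Corollary~\ref{cor : highest = cheapest} forces $y_u=y_v={\rm height}(\MCP)=:y^*$. Because the cones $\{C_k\}_{k\in[m]}$ cover $\mathbb{R}^{m-1}$ (any vector with all non-positive entries lies in $C_0$, and any vector with some positive entry lies in $C_k$ for an index $k>0$ attaining its maximum), there exists $j\in[m]$ with $\bv-\bu\in C_j$. If $j>0$, set $S_j^{*}=S_j(\bu)$ so that $f_j(\bu,S_j^{*})=y^*$; the strict part of Lemma~\ref{lem : main lemma} gives $f_j(\bv,S_j^{*})<y^*$, contradicting $(\bv,y^*)\in\MCP$, which demands $y^*\le g_j(\bv)\le f_j(\bv,S_j^{*})$. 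If $j=0$, then $\bv\preceq\bu$ with $\bv\neq\bu$, so $\bu-\bv$ has a strictly positive coordinate and hence lies in some $C_k$ with $k>0$; swapping the roles of $\bu$ and $\bv$ reduces this to the previous case.

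For the vertex property, let $(\x^*,y^*)$ be the multi-typed intersection point realized by the chain $\bfS^{*}=(S_0^{*},\dots,S_{m-1}^{*})$. Each of the $m$ inequalities $y\le f_k(\x,S_k^{*})$, $k\in[m]$, is a defining half-space of $\MCP$ (since $h(\x)\le g_k(\x)\le f_k(\x,S_k^{*})$), and all $m$ are tight at $(\x^*,y^*)$. Proposition~\ref{prop: big paper} asserts that the corresponding $m$ hyperplanes in $\mathbb{R}^m$ meet at a unique point, which is equivalent to their normal vectors being linearly independent. Hence $(\x^*,y^*)$ is the unique point of $\mathbb{R}^m$ at which $m$ linearly independent constraints of $\MCP$ are simultaneously tight, so by definition it is a vertex of $\MCP$.

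The main obstacle is the $j=0$ case of the uniqueness argument, since Lemma~\ref{lem : main lemma} yields only a non-strict inequality on $C_0$ and so does not by itself close out the contradiction. The resolution is the simple observation that $\bv\preceq\bu$ with $\bv\neq\bu$ always produces a strictly positive coordinate in $\bu-\bv$, placing that difference in some $C_k$ with $k>0$, so that the strict half of the lemma can be invoked in the reversed direction.
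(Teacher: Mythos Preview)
Your uniqueness argument is correct and is essentially a reorganization of the paper's own proof: the paper shows $Q\preceq Q'$ and $Q'\preceq Q$ separately by assuming the contrary and invoking the strict case of Lemma~\ref{lem : main lemma}, which is exactly your case split on whether $\bv-\bu\in C_0$ or $\bv-\bu\in C_j$ for some $j>0$, followed by a swap.

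Your vertex argument, however, takes a genuinely different route. The paper argues by contradiction via convexity: if $(\x^*,y^*)$ were not a vertex it would be the midpoint of two other highest points $Q_1,Q_2\in\MCP$, and since $\x_i-\x^*\notin C_0$ for at least one $i\in\{1,2\}$, the strict part of Lemma~\ref{lem : main lemma} forces $g_j(\x_i)<g_j(\x^*)=y^*$ for some $j>0$, contradicting $h(\x_i)=y^*$. You instead observe directly that the uniqueness of the intersection in Proposition~\ref{prop: big paper} is equivalent to the $m$ hyperplane normals being linearly independent, so $(\x^*,y^*)$ satisfies $m$ linearly independent tight valid inequalities of $\MCP$ and is therefore an extreme point by the standard polyhedral characterization. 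Your route is shorter and avoids a second appeal to Lemma~\ref{lem : main lemma}; the paper's route stays entirely within the cone machinery and does not need to invoke any external facts about vertices of polyhedra.
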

\begin{proof}
%
     To prove the uniqueness, assume $Q = (\x ,h)$ and $Q'=(\x',h)$  ($Q,Q' \in \R^m$ and $\x,\x' \in \R^{m-1}$)   are multi-typed intersection  points in $\MCP$. Note that from
     Corollary \ref{cor : highest = cheapest}
     $h = h' = h(\x) = h(\x')=\mbox{\rm height}\lp \MCP \rp.$
    
     It is sufficient to show that $Q = Q'$. Assume $Q \preceq Q'$ doesn't hold. By  Lemma \ref{lem : main lemma}, there exists an index $i\in<m-1>$ s.t.
    \begin{equation}\label{eq: uniqueness ineq}
       \forall S_i\in \mbbS_i,\ \quad  f_i \lp \x,S_i \rp  > f_i \lp \x',S_i \rp.
    \end{equation}
   Let  $f_i \lp \bar \x,S_i \rp $ denote the hyperplane associated with state $S_i.$ 
   Let $S_i\in \mbbS_i$ be such that 
   $Q$ lies on 
   $f_i \lp \bar \x,S_i \rp.$ 
    Then
    \[
        f_i \lp \x,S_i \rp  = h = h \lp \x' \rp  \leq f_i \lp \x',S_i \rp,
    \]
    contradicting  (\ref{eq: uniqueness ineq}). Hence
 $
        Q \preceq Q'.
        $
    We can symmetrically prove that  $Q' \preceq Q$. Thus, $Q = Q'$, and the uniqueness follows.

    If $Q$ is not a vertex of $\MCP,$ since it is  a highest point of  $\MCP$, there exist two  
    other points highest points  $Q_1, Q_2$ on $\MCP$ s.t.
    \[
        Q = \frac{1}{2}\lp Q_1 + Q_2\rp
    \]
    It is clear that at least for one $i \in \{1,2\}$, $\x_i - \x \notin C_0$; thus, by Lemma \ref{lem : main lemma}, there exists $j \in \langle m-1 \rangle$ s.t. for all $S_j \in \States_j,$
        $
        f_j(\x_i, S_j) < f_j(\x, S_j)
        $
    and thus,
    $
        g_j(\x_i) < g_j(\x)
        $
    which is a contradiction. Thus $Q$ must be a vertex of $\MCP.$
\end{proof}

\begin{corollary}\label{cor:gcor}
There exists a unique $\x^* \in \mathbb{R}^m$ satisfying 
$$g_0(\x^*)=
g_1(\x^*)=
\ldots =
g_{m-1}(\x^*).
$$
Furthermore, $\bfS(\x^*)$ is a solution to the MCMC problem.
\end{corollary}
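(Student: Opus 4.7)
The plan is to derive both existence and uniqueness directly from the two theorems we have just proved about Algorithm \ref{alg: iterative} and the Markov Chain Polytope, with the bridge being the observation that a point at which all $g_k$ agree is precisely a multi-typed intersection point. Existence is essentially free from Theorem \ref{prop: convergence proof}: the algorithm terminates at some $\vecp_n$, and the computation at the end of that proof shows explicitly that $g_0(\vecp_n) = g_1(\vecp_n) = \cdots = g_{m-1}(\vecp_n) = c_n$. So I would set $\x^* := \vecp_n$. The ``furthermore'' clause, that $\bfS(\x^*)$ solves the MCMC problem, is the final sentence of Theorem \ref{prop: convergence proof}.

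For uniqueness, I would take any $\x^{**} \in \mathbb{R}^{m-1}$ with $g_0(\x^{**}) = \cdots = g_{m-1}(\x^{**}) = c^{**}$ and, for each $k$, pick $S_k^{**} \in \arg\min_{S_k \in \mbbS_k} f_k(\x^{**}, S_k)$, so that $f_k(\x^{**}, S_k^{**}) = c^{**}$ for every $k$. The $m$ hyperplanes associated with $\bfS^{**} := (S_0^{**}, \ldots, S_{m-1}^{**})$ therefore all pass through $(\x^{**}, c^{**})$, and Proposition \ref{prop: big paper} identifies this as their unique common point, so $(\x^{**}, c^{**})$ is a multi-typed intersection point. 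Next I would check membership in $\MCP$: the equal-values assumption gives $h(\x^{**}) = \min_k g_k(\x^{**}) = c^{**}$, so $c^{**} \leq h(\x^{**})$ holds with equality, and Proposition \ref{prop: big paper} yields $c^{**} \geq \mathrm{height}(\MCP) \geq 0$ (nonemptiness of $\MCP$ being guaranteed by the existence half). Hence $(\x^{**}, c^{**}) \in \MCP$, and Theorem \ref{prop: main proposition for uniqueness} forces $(\x^{**}, c^{**}) = (\x^*, c_n)$, giving $\x^{**} = \x^*$.

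The only subtlety worth flagging is the verification that $(\x^{**}, c^{**})$ actually sits inside $\MCP$ before invoking Theorem \ref{prop: main proposition for uniqueness}; the earlier uniqueness statement applies only to multi-typed intersections that lie in $\MCP$, not to those with $y$-coordinate strictly above the polytope. Once the equal-values condition is used to identify $c^{**}$ with $h(\x^{**})$ (placing the point on the top boundary of $\MCP$) and Proposition \ref{prop: big paper} supplies the lower bound $c^{**} \geq \mathrm{height}(\MCP) \geq 0$, the reduction to the previously established uniqueness theorem is immediate, and the corollary follows with essentially no further calculation.
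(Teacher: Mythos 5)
Your proof is correct and uses the same key ingredient the paper relies on (Theorem~\ref{prop: main proposition for uniqueness}) for the uniqueness half; you make the existence half self-contained by extracting it from the end of the proof of Theorem~\ref{prop: convergence proof}, whereas the paper simply defers that part to Corollary 3.3 of~\cite{golinaifv-m}. Your explicit check that $(\x^{**}, c^{**})$ actually lies in $\MCP$ before invoking the uniqueness theorem is exactly the right subtlety to flag, and it is a detail the paper's terse proof leaves implicit.
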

\begin{proof} In the appendix.
\end{proof}

\section{A New Weakly Polynomial Time Algorithm For AIFV-$3$-coding}

The previous section proved the correctness of Algorithm \ref {alg: iterative} in solving any 
MCMC
problem.  
The running time of the algorithm depends upon two factors.  The first is how quickly $\bfS(\x)$ can be calculated. This is very  dependent upon how the states $S_i$ are defined.  In the special case of  AIFV-$m$ coding of  $n$ source-code words,  $\bfS(\x)$  was first solved using integer linear  programming \cite{iterative_2}  requiring time exponential in $n.$

When  $\x \in [0,1]^{m-1}$ this was improved  to polynomial time by using dynamic programs.   More specifically,  for $m=2,$   \cite{dp_2} developed a 
$O(n^5)$ DP, improved to $O(n^3)$ by \cite{dp_2_speedup}; for $m >2,$  \cite{iterative_m}
 solved it in 
$O(n^{2m+1})$ time, improved to $O(n^{m+2})$ by \cite{golin2022speeding}.  

The second factor determining  running time is the number of iterative steps performed by the algorithm.  It might have to iterate through ALL permissible Markov Chains in which case, at least for AIFV coding, it requires exponential time.

\cite{golinaifv-m} recently showed how to transform a MCMC problem into an  equivalent Linear Programming one of finding a highest point in the associated Markov Chain Polytope $\MCP.$  Although $\MCP$ might be defined by an exponential number of constraints this linear program can be solved   using the Ellipsoid method.
In the specific case of AIFV-$m$ coding this yielded a  polynomial  (in $n$ and $b$) time algorithm.

Unfortunately, this result was purely theoretical, since the Ellipsoid method is  known to be 
extremely difficult to practically implement.

In this section we derive further special properties
of the Markov Chain Polytope for the specific case of the AIFV-$3$ problem. These will permit replacing the Ellipsoid method by simple binary search and yielding a $O(n^5 b^2)$ time algorithm.

Due to space considerations we restrict the discussion to the specific case of AIFV-$3$ coding.  We claim, though, that this technique works for all AIFV-$m$ coding, for $m > 3$ as well.



  Recall that  for each  $i \in \{0,1,2\}$ and every $S_i\in \mbbS_i$ there exists a type-$i$ hyperplane 
$f_i(\x,S_i)$ that ranges over all  $\x = (x_1,x_2) \in \mathbb{R}^{2}$. Let $\lambda\in\mathbb{R}$  be  fixed. 
For $j \in \{0,1\}$ let $h_{j,\lambda}$  denote the hyperplane in $\mathbb{R}^{3}$ defined by setting $x_j=\lambda$.  Now consider the intersection of type-$0$ and type-$2$ hyperplanes with $h_{1,\lambda}$:
\begin{eqnarray}
    f_0((\lambda,x_2), S_0) &=& \ell(S_0) + q_1(S_0)\cdot \lambda +q_2(S_0)\cdot x_2,\nonumber\\
    f_2((\lambda,x_2), S_2) &=& \ell(S_2) + q_1(S_2)\cdot \lambda +q_2(S_2)\cdot x_2 - x_2 \nonumber \\
    &=& \ell(S_2) + q_1(S_2)\cdot \lambda +\big(q_2(S_2)-1\big)\cdot x_2 \nonumber 
\end{eqnarray}
These are both lines in $x_2.$
The intersection of each 
$f_0$
hyperplane with $h_{1,\lambda}$ corresponds to a line with a non-negative slope; since 
$q_2(S_i)=1-q_0(S_i)-q_1(S_i)<1,$ 
because $q_0(S_i)$ is always positive,
the intersection of each 
$f_2$
hyperplane with $h_{1,\lambda}$ corresponds to a line with a negative slope. 
Thus the intersection of $g_0$ and with $h_{1,\lambda}$ is the lower-envelope of non-negative slope lines and the intersection of $g_2$ and with $h_{1,\lambda}$ is the lower-envelope of negative slope lines. 
Thus
\begin{lemma} For any fixed $\lambda\in\mathbb{R}$, 
    $g_0$ and $g_2$ intersect exactly once on the $h_{1,\lambda}$ hyperplane.
\end{lemma}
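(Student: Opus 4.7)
The plan is to restrict both $g_0$ and $g_2$ to $h_{1,\lambda}$ and analyze them as one-dimensional functions of $x_2$. The discussion preceding the lemma already notes that each affine function $f_0((\lambda,x_2),S_0)$ has slope $q_2(S_0)\ge 0$ in $x_2$ and each $f_2((\lambda,x_2),S_2)$ has slope $q_2(S_2)-1 = -\bigl(q_0(S_2)+q_1(S_2)\bigr) < 0$, the strict negativity coming from the condition $q_0(S_2)>0$ in the definition of permissible state. Consequently the restriction of $g_0$ to $h_{1,\lambda}$ is a pointwise minimum of finitely many non-decreasing affine functions of $x_2$, hence continuous and non-decreasing in $x_2$, while the restriction of $g_2$ is a minimum of finitely many strictly decreasing affine functions, hence continuous and strictly decreasing in $x_2$.

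Uniqueness then follows immediately. Define $\phi(x_2) := g_0(\lambda,x_2) - g_2(\lambda,x_2)$. This is the difference of a non-decreasing function and a strictly decreasing one, hence strictly increasing in $x_2$, so $\phi$ has at most one zero.

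For existence I would invoke the intermediate value theorem. Fix any $S_2^\star \in \mbbS_2$; then $g_2(\lambda,x_2) \le f_2((\lambda,x_2),S_2^\star)$, and the right-hand side has strictly negative slope in $x_2$, so $g_2(\lambda,x_2) \to -\infty$ as $x_2 \to +\infty$. Since $g_0(\lambda,x_2) \ge g_0(\lambda,0)$ for all $x_2 \ge 0$ by monotonicity, $\phi(x_2) > 0$ for sufficiently large $x_2$. Conversely, as $x_2 \to -\infty$ every one of the finitely many affine functions $f_2((\lambda,x_2),S_2)$ tends to $+\infty$, so their minimum $g_2(\lambda,x_2)$ also tends to $+\infty$; since $g_0(\lambda,x_2) \le g_0(\lambda,0)$ for $x_2 \le 0$, we get $\phi(x_2) < 0$ for sufficiently negative $x_2$. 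Continuity of $\phi$ then gives at least one zero, and combined with the uniqueness step this yields exactly one crossing point.

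There is no substantive obstacle; the only minor care needed is that $g_0|_{h_{1,\lambda}}$ is only non-decreasing rather than strictly increasing (it can be locally flat when some $q_2(S_0)=0$), but this is harmless because it is $g_0-g_2$ whose strict monotonicity drives uniqueness, and the existence argument only uses that $g_0$ is bounded above on $(-\infty,0]$ and bounded below on $[0,+\infty)$.
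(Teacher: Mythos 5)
Your proof is correct and follows the same reasoning the paper intends: on $h_{1,\lambda}$ the restriction of $g_0$ is a lower envelope of lines of non-negative slope and the restriction of $g_2$ is a lower envelope of lines of strictly negative slope (because $q_0>0$ forces $q_2-1<0$), so they cross exactly once. You simply make explicit what the paper leaves implicit in its ``Thus'': that $\phi = g_0 - g_2$ is strictly increasing and unbounded in both directions along the line, so the intermediate value theorem gives existence and strict monotonicity gives uniqueness.
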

Note that this means that, for any $x_1 \in \mathbb{R}$ there exists a {\em unique} $x_2$ such that  $g_0(x_1,x_2)=g_2(x_1,x_2).$
This permits defining the {\em function}
 $E_1: [0,1] \rightarrow \mathbb{R}^3$ to be a function that takes $x_1\in[0,1]$ as an input and outputs the unique intersection point of $g_0$ and $g_2$ on $h_{1,x_1}$. 
\begin{lemma} \label{lemma:continuous}
    $E_1$  is a piecewise linear continuous function. 
\end{lemma}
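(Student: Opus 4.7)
The plan is to show that $E_1$ is affine on each of finitely many subintervals of $[0,1]$, and then to derive continuity from the strict monotonicity of $g_0-g_2$ in the $x_2$ direction.

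First, since $g_0$ and $g_2$ are minima of finitely many affine functions on $\mathbb{R}^2$, each is piecewise linear and continuous, and the common refinement of their linearity regions partitions $\mathbb{R}^2$ into finitely many convex polygonal regions. On any such region $R$, we have $g_0(\x)=f_0(\x,S_0)$ and $g_2(\x)=f_2(\x,S_2)$ for some fixed $S_0\in\mathbb{S}_0$ and $S_2\in\mathbb{S}_2$, so
\[
g_0(\x)-g_2(\x) = \bigl[\cost(S_0)-\cost(S_2)\bigr] + \bigl[q_1(S_0)-q_1(S_2)\bigr]x_1 + \bigl[q_2(S_0)+1-q_2(S_2)\bigr]x_2.
\]
The coefficient of $x_2$ equals $q_2(S_0)+(1-q_2(S_2))$, which is strictly positive since $q_2(S_0)\ge 0$ and condition (1) of the definition of permissible states gives $q_0(S_2)>0$, so that $q_2(S_2)<1$. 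Hence $g_0-g_2$ is globally strictly increasing in $x_2$ for every fixed $x_1$ (this also re-proves the uniqueness from the preceding lemma).

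Second, within each region $R$ the equation $g_0=g_2$ can be solved explicitly for $x_2$ as an affine function of $x_1$, precisely because that coefficient of $x_2$ is nonzero. Projecting onto the $x_1$-axis the finitely many regions whose interior meets the graph $\{(x_1,y(x_1))\}$ produces a finite subdivision of $[0,1]$ on each piece of which $y(x_1)$ is affine. Writing $E_1(x_1)=\bigl(x_1,\,y(x_1),\,g_0(x_1,y(x_1))\bigr)$, the third coordinate is then a composition of piecewise linear continuous maps and hence also piecewise linear.

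Finally, for continuity I would appeal directly to the global strict monotonicity established above. Given $x_1^{\ast}\in[0,1]$ and $\varepsilon>0$, strict monotonicity yields $g_0(x_1^{\ast},y(x_1^{\ast})\pm\varepsilon)-g_2(x_1^{\ast},y(x_1^{\ast})\pm\varepsilon)\neq 0$ with opposite signs; by continuity of $g_0-g_2$ on $\mathbb{R}^2$, these strict inequalities persist on some neighborhood of $x_1^{\ast}$, which by uniqueness forces $|y(x_1)-y(x_1^{\ast})|<\varepsilon$ whenever $x_1$ is sufficiently close to $x_1^{\ast}$. No serious obstacle arises: the one non-cosmetic input is the bound $q_0(S_2)>0$ from the definition of permissible states, which is what makes the coefficient of $x_2$ in $g_0-g_2$ strictly positive and thereby both rules out vertical degeneracies within a region and gives continuity essentially for free.
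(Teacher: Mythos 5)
Your proof is correct, but it follows a genuinely different route from the paper's for the continuity part. The paper defines the set $g_1'=\{E_1(x_1):x_1\in[0,1]\}\subset\mathbb{R}^3$, asserts it is compact because $g_0,g_2$ are piecewise linear, and then derives continuity by contradiction via a sequential-compactness argument: a hypothetical sequence $z_i\to x_1$ with $d(E_1(z_i),E_1(x_1))\ge\varepsilon$ would have a convergent subsequence whose limit in $g_1'$ must project to $x_1$ and therefore equal $E_1(x_1)$, a contradiction. Your proof instead isolates the quantitative fact that the $x_2$-coefficient of $g_0-g_2$ on any linearity region is $q_2(S_0)+1-q_2(S_2)>0$ (ultimately from $q_0(S_2)>0$), concludes that $g_0-g_2$ is strictly increasing in $x_2$, and turns that into a direct sign-persistence argument: $g_0-g_2$ has opposite signs at $(x_1^*,y(x_1^*)\pm\varepsilon)$, and by continuity of $g_0-g_2$ these signs persist for nearby $x_1$, trapping $y(x_1)$ within $\varepsilon$ of $y(x_1^*)$. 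This is self-contained and does not require the paper's somewhat terse compactness claim (which in fact seems to need an a priori bound on $x_2$, something closer to what Lemma \ref{lemma:intersection between 0,1} later supplies). For piecewise linearity both arguments rest on the same observation that $g_0,g_2$ are lower envelopes of finitely many affine functions, but you make this explicit via the common refinement of linearity regions and solve $g_0=g_2$ for $x_2$ as an affine function of $x_1$ on each cell, which is more detailed than the paper's one-sentence appeal. In short: same structure for piecewise linearity, a cleaner and more elementary argument for continuity, with the strict monotonicity of $g_0-g_2$ in $x_2$ doing all the work.
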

\begin{proof}
Included in the appendix.
\end{proof}
Everything up to this point would be correct for any Minimum Cost Markov Chain problem. 
 We now use the following known facts that are specific to the AIFV coding case, i.e., when the states are AIFV coding trees.
\begin{lemma}[\cite{golinaifv-m}, Lemma 7.2] \label{lemma:poncare miranda}
Let the states $S_i$ be as defined by the AIFV coding trees. 
    Let $m$ be fixed, $n\geq 2^{m}-1$, $\x\in\mathbb{R}^{m-1}$ and $k\in\{1,2,...,m-1\}.$ Then
    \begin{itemize}
        \item If $x_k =0$, $g_0(\x)-g_k(\x) \leq 0.$
        \item If $x_k =1$, $g_k(\x)-g_0(\x) \leq 0.$
    \end{itemize}
\end{lemma}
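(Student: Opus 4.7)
The plan is to establish each of the two inequalities by an explicit combinatorial reduction between permissible type-$0$ and type-$k$ AIFV trees. The key idea is that the AIFV tree families $\mbbS_0$ and $\mbbS_k$ are closely related: a tree in one family can be converted to a tree in the other by a local root modification that promotes (or demotes) a leaf to a type-$k$ ``slot''. This modification has predictable effects on both the codeword cost $\ell(\cdot)$ and on the transition vector $(q_0,\ldots,q_{m-1})$. The hypothesis $n \ge 2^m - 1$ ensures that the trees involved are large enough for these root modifications always to be feasible within the permissible families.

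For the first inequality, I would take any $S_k \in \mbbS_k$ attaining $g_k(\x)$ and exhibit an $S_0 \in \mbbS_0$ obtained by demoting the root's type-$k$ slots into ordinary type-$0$ codeword leaves. I expect this conversion to preserve (or decrease) $\ell$ and to shift mass between $q_0$ and $q_k$ while leaving the other $q_j$'s unchanged. Because $x_k = 0$ by hypothesis, the $q_k(S_0)x_k$ term in $f_0(\x,S_0)$ vanishes; and because $x_0$ does not occur in either $f_0$ or $f_k$, shifts involving $q_0$ are invisible to both hyperplane values. The comparison $f_0(\x,S_0) \le f_k(\x,S_k)$ then collapses to the cost inequality produced by the conversion, giving $g_0(\x) \le g_k(\x)$.

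The second inequality follows from the reverse reduction: starting from any $S_0 \in \mbbS_0$ that attains $g_0(\x)$, build $S_k \in \mbbS_k$ by a root modification that introduces a type-$k$ slot. This typically increases $\ell$ by a bounded amount (essentially~$1$) and shifts mass from $q_0$ into $q_k$. At $x_k = 1$ the $-x_k = -1$ term in $f_k(\x,S_k)$ absorbs the unit increase in $\ell$, while the new $q_k(S_k)x_k$ contribution is balanced by the mass removed from $q_0$ (which is again invisible to $f$). Assembling the terms yields $f_k(\x,S_k) \le f_0(\x,S_0)$, and hence $g_k(\x) \le g_0(\x)$.

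The main obstacle, and what requires real care, is to define the tree transformations so that they always produce a genuinely valid AIFV tree of the target type. The AIFV structural rules (incomplete-leaf constraints, prefix conditions, parent-type restrictions on which leaves may feed which tree types) are intricate, and a careless re-rooting can break them; in particular the construction must never produce a degenerate tree or violate the standing assumption $q_0(S)>0$. The condition $n \ge 2^m - 1$ is precisely what guarantees enough leaves at small enough depth for the re-marking and root-insertion operations to be carried out in every case, and it is also the reason one should expect the lemma to fail for very small source alphabets.
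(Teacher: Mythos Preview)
The paper does not prove this lemma at all: it is quoted verbatim as Lemma~7.2 of \cite{golinaifv-m} and used as a black box. There is therefore no ``paper's own proof'' to compare against; any assessment here is of your outline on its own merits.

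Your strategy is the natural one, and for the first bullet it is in fact cleaner than you indicate. A permissible type-$k$ tree $S_k$ already satisfies every structural constraint imposed on type-$0$ trees (the only extra requirement on $T_k$ for $k\ge 1$ is the presence of the intermediate-$1$ node at $0^k$, which is not forbidden in $T_0$). Hence $S_k\in\mbbS_0$ with identical $\ell$ and identical $q_j$'s, and $f_0(\x,S_k)-f_k(\x,S_k)=x_k=0$ gives $g_0(\x)\le g_k(\x)$ directly. No ``demotion'' is needed, and your worry that mass might move between $q_0$ and $q_k$ does not arise.

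The second bullet is where the content lies, and here your proposal remains only a plan. You assert that a suitable root modification of an optimal $S_0$ increases $\ell$ by ``essentially~$1$'' and shifts probability mass only between $q_0$ and $q_k$, but you do not specify the modification, and neither claim is obvious for a generic construction. Inserting the required intermediate-$1$ node at $0^k$ can lengthen many codewords simultaneously, and relabelling a master node to change its degree can move mass to $q_j$ for $j\notin\{0,k\}$; either effect would spoil the bookkeeping that makes the $-x_k=-1$ term absorb the cost increase. You correctly flag validity of the resulting tree and the role of $n\ge 2^m-1$ as the delicate points, but until an explicit transformation is written down and its effect on $\ell$ and on each $q_j$ is computed, this half of the argument is not yet a proof.
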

This lets us prove
\begin{lemma} \label{lemma:intersection between 0,1}
    Let $x_1\in[0,1]$ and $E(x_1)=(x_1,x_2,y)$ be the unique intersection of $g_0$ and $g_2$ on the $h_{1,x_1}$ hyperplane. Then $x_2\in[0,1]$.
\end{lemma}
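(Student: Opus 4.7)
The plan is to reduce the claim to a one-dimensional intermediate value argument combined with the uniqueness guaranteed by the preceding lemma. Fix $x_1\in[0,1]$ and introduce the auxiliary function
$$\phi(t) \;=\; g_0(x_1,t) - g_2(x_1,t),$$
defined for $t\in\mathbb{R}$. Both $g_0$ and $g_2$ are minima of finitely many affine functions, hence continuous, so $\phi$ is continuous in $t$.

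The key step is to control the sign of $\phi$ at the endpoints $t=0$ and $t=1$ using Lemma~\ref{lemma:poncare miranda} applied with $m=3$ and $k=2$. Setting $\x=(x_1,0)$ gives $g_0(\x)-g_2(\x)\le 0$, i.e.\ $\phi(0)\le 0$, and setting $\x=(x_1,1)$ gives $g_2(\x)-g_0(\x)\le 0$, i.e.\ $\phi(1)\ge 0$. Note that the hypothesis $n\ge 2^{m}-1=7$ of Lemma~\ref{lemma:poncare miranda} is in force because we are working in the AIFV-$3$ setting; the lemma makes no further assumption on $\x$ beyond the coordinate being fixed to $0$ or $1$, so its application is legitimate.

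By the intermediate value theorem there exists some $x_2^{\star}\in[0,1]$ with $\phi(x_2^{\star})=0$, i.e.\ $g_0(x_1,x_2^{\star})=g_2(x_1,x_2^{\star})$. So $(x_1,x_2^{\star},g_0(x_1,x_2^{\star}))$ is an intersection point of $g_0$ and $g_2$ lying on the hyperplane $h_{1,x_1}$. The previous lemma, however, asserts that $g_0$ and $g_2$ intersect exactly once on $h_{1,x_1}$, so this point must coincide with $E(x_1)=(x_1,x_2,y)$. Therefore $x_2=x_2^{\star}\in[0,1]$, which is what we wanted.

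The argument is essentially free of obstacles once Lemma~\ref{lemma:poncare miranda} is available: the only subtlety is remembering that the uniqueness from the earlier lemma, applied with $\lambda=x_1$, is what lets us promote the ``some root exists in $[0,1]$'' conclusion of IVT to ``the root lies in $[0,1]$''. Without uniqueness, one would only know that $x_2^\star$ is one of possibly several intersection points; here uniqueness pins it down as exactly the one produced by $E$.
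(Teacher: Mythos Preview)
Your proof is correct and follows essentially the same approach as the paper: use Lemma~\ref{lemma:poncare miranda} to pin down the signs of $g_0-g_2$ at $x_2=0$ and $x_2=1$, invoke continuity (IVT) to get a root in $[0,1]$, and then appeal to the uniqueness of the intersection on $h_{1,x_1}$ to conclude that this root is the $x_2$ produced by $E$. The paper's version is more terse, but the logic is identical.
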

\begin{proof} 
    From Lemma \ref{lemma:poncare miranda}, $g_0\big(x_1,0\big)- g_2\big(x_1,0\big)\le 0$ and $g_0\big(x_1,1\big)- g_2\big(x_1,1\big)\ge 0$. Since 
    $g_0\big(x_1,x_2\big)- g_2\big(x_1,x_2\big)$ is continuous in $x_2$ for fixed $x_1,$
    the unique intersection of $g_0$ and $g_2$ with the $h_{1,x_1}$ hyperplane must occur when  $x_2\in[0,1]$.
\end{proof}
\begin{lemma} \label{lemma:2d-binarysearch}
    If $\lambda\in[0,1]$ can be written using at most $O(b)$
    bits when represented in binary,  then $E_1(\lambda)$ can be calculated in  $O(n^5\cdot b)$ time.
\end{lemma}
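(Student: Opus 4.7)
The plan is to locate $E_1(\lambda)$ by one-dimensional bisection on $x_2\in[0,1]$, searching for the unique value $x_2^{\ast}$ with $g_0(\lambda,x_2^{\ast})=g_2(\lambda,x_2^{\ast})$. Set $\phi(x_2)=g_0(\lambda,x_2)-g_2(\lambda,x_2)$. Since $g_0$ and $g_2$ are lower envelopes of affine functions, $\phi$ is continuous and piecewise linear in $x_2$; Lemma \ref{lemma:poncare miranda} (with $k=2$) yields $\phi(0)\le 0$ and $\phi(1)\ge 0$; and the intersection-uniqueness lemma immediately before Lemma \ref{lemma:continuous} guarantees that $\phi$ has a unique zero in $[0,1]$. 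Standard bisection, guided by the sign of $\phi$ at the current midpoint, therefore converges to $x_2^{\ast}$, after which $E_1(\lambda)=(\lambda, x_2^{\ast}, g_0(\lambda,x_2^{\ast}))$ is obtained by one more evaluation.

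Each bisection step reduces to computing $g_i(\x)=\min_{S_i\in\mbbS_i}f_i(\x,S_i)$ at $\x=(\lambda,x_2)$ for $i\in\{0,2\}$. By Definition \ref{hp def} this is exactly the $S_i(\x)$ oracle that drives Algorithm \ref{alg: iterative}. For AIFV-$3$ coding (the $m=3$ case) the dynamic program of \cite{golin2022speeding} performs this computation in $O(n^{m+2})=O(n^5)$ time, under the standard word-RAM assumption that arithmetic on $O(b)$-bit rationals costs $O(1)$. Hence one bisection step costs $O(n^5)$.

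To bound the number of iterations, observe that every hyperplane $f_i(\cdot,S_i)$ has coefficients which are rationals of bit-size $O(b)$, since the source probabilities do. Hence each breakpoint of $\phi$, being determined by a $2\times 2$ linear system with $O(b)$-bit entries, is itself a rational of bit-size $O(b)$, and any two distinct breakpoints are separated by at least $2^{-\Omega(b)}$. After $O(b)$ bisection steps the interval of uncertainty therefore lies inside a single linear piece of $\phi$, at which point $x_2^{\ast}$ is recovered exactly in $O(1)$ further arithmetic by intersecting the two active affine pieces. Total cost: $O(b)$ iterations of $O(n^5)$ work, i.e.\ $O(n^5 b)$.

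The chief obstacle will be making the separation argument above fully rigorous: $g_0$ and $g_2$ are lower envelopes of potentially exponentially many hyperplanes, so one must verify that the pieces actually realized near the root are indexed by single states $S_i\in\mbbS_i$ whose coefficients are $O(b)$-bit rationals, and hence that their pairwise breakpoints cannot be arbitrarily close. This is a standard bit-complexity calculation, but it is the only place where the AIFV specification (as opposed to the generic Markov-chain setup of Section \ref{sec: Minimum Cost Markov Chain Problem}) actually enters the running-time analysis.
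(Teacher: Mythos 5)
Your proof takes essentially the same route as the paper's: a sign-based bisection in $x_2$ over $[0,1]$ guided by $\phi(x_2)=g_0(\lambda,x_2)-g_2(\lambda,x_2)$, with the endpoint sign conditions coming from Lemma~\ref{lemma:poncare miranda}, each probe costing $O(n^5)$ via the AIFV-$3$ dynamic program of \cite{golin2022speeding}, and a bit-complexity (breakpoint-separation) argument showing $O(b)$ iterations suffice before the true $x_2^*$ can be extracted exactly by a final line intersection. The only cosmetic differences are that the paper keeps the interval endpoints dyadic ($\ell=t2^{-s}$, $r=(t+1)2^{-s}$) and phrases the final exact recovery in terms of the four candidate lines $f_i((\lambda,\cdot),S_i^d)$ for $i\in\{0,2\}$, $d\in\{\ell,r\}$ (which also cleanly covers the case that $x_2^*$ is itself a breakpoint of $g_0$ or $g_2$), whereas you phrase it as landing inside a single linear piece of $\phi$; these are minor variants of the same halving argument the paper imports from \cite{golin2020polynomial}.
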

\begin{proof} We use  a halving method similar to one used in \cite{golin2020polynomial} for solving the AIFV-$2$ problem.
     
Note that, for any $(x_1,x_2) \in [0,1]^2$ the
DP
algorithm from 
\cite{golin2022speeding} mentioned earlier can calculate $g_0(x_1,x_2)$ and $g_2(x_1,x_2)$ in $O(n^5)$ time.  Thus 
it can determine
 in $O(n^5)$ time  whether or not $g_0\big(x_1,x_2\big)- g_2\big(x_1,x_2\big)\le 0$.

From Lemma \ref{lemma:intersection between 0,1}, finding $E_1(\lambda)$ is equivalent to finding the unique $x^*_2(\lambda) \in [0,1]$ satisfying $g_0(\lambda,x^*_2(\lambda))=g_2(\lambda,x^*_2(\lambda)).$

From Lemma \ref{lemma:poncare miranda}, $g_0\big(x_1,0\big)- g_2\big(x_1,0\big)\le 0$ and $g_0\big(x_1,1\big)- g_2\big(x_1,1\big)\ge 0$.  Set 
$\ell = 0$ and $r=1$.  We can then simply do a halving search, after $s$ steps always maintaining an interval $[\ell,r]$ where $x^*_2(\lambda) \in [\ell,r]$ satisfying 
$\ell = t 2^{-s}$ and $r = (t+1) 2^{-s}$ for some $t \in [2^s].$ This requires $O(n^5 s)$ time in total.

We now note that, again using an argument similar to the one used in  \cite{golin2020polynomial}, in the AIFV case we know that the  $q_j(S_i)$ and  $\cost(S_i)$
are all integer multiples of $2^{-b}.$

This lower bounds the absolute values of the  non-zero slopes  and  $y$-intercepts of all lines  $f_i((\lambda,x_2),S_i)$ to be $2^{-O(b)}.$
This in turn implies (see \cite{golin2020polynomial} for more details) that after $ O(b)$ steps the interval $[\ell,r]$ will be small enough that $(\lambda,x_2^*(\lambda))$ will lie on one of the four lines,  $f_i((\lambda,x_2),S_i^d)$ where  
$i \in \{0,2\},$
$d\in \{\ell,r\}$ 
and $S_i^d=S_i(\lambda,d).$ The $S_i^d$ can be found in $O(n^5)$ time using the DP  procedure.
Since there are only 4 lines, $x^*_2(\lambda)$  itself can be found in only a further $O(1)$ time.
\end{proof}
Set $E_1^{\prime}: [0,1] \rightarrow \mathbb{R}^2$ to be the  function that takes $\lambda\in[0,1]$ as input and outputs the first two elements of $E_1(\lambda)$.
That is if $E_1(\lambda)=(\lambda,x_2,y)$ then $E_1^{\prime}=(\lambda,x_2)$.

\begin{algorithm}[h] \label{algorithm:3d-iterative-binary-search}
\caption{Finds  $x_1^{\prime}$ satisfying $|x_1^{*}-x_1^{\prime}|\leq2^{-b^{\prime}}$.}\label{alg:iterative-binarysearch}
\begin{algorithmic}[1]
\Require $\big\{p_1,...,p_n\big\}$ 
and $b^{\prime}$.
\algrenewcommand\algorithmicrequire{\textbf{Initialization:}}
\Require $l,r \gets 0,1$ and $\epsilon_0=2^{-b^{\prime}}$.
\Repeat
\State mid $\leftarrow$ $\frac{l+r}{2}$ ;
\State Calculate $p=E_1^{\prime}\big(\mbox{mid}\big)$.
\State $e_0 = g_0(p)$;  $e_1$ = $g_1(p)$
\If{$e_0 < e_1$}
\State $l = \mbox{mid}$
\Else 
\State $r =\mbox{mid}$ 
\EndIf
\Until{$r-l=\epsilon_0$}\\
\Return $l$
\end{algorithmic}
\end{algorithm}
Let $\x^{*}=(x_1^{*},x_2^{*},y^{*})$ be the unique intersection of $g_0$, $g_1$ and $g_2$. We will now construct an algorithm that helps us find a point $\x^{\prime}=(x_1^{\prime},x_2^{\prime},y^{\prime})$ that is close to $\x^{*}$. More specifically, we will prove the following lemma:
\begin{lemma} \label{lemma: binary search algorithm}
    Let $b^{\prime}$ be an integer such that $b^{\prime}=O(b)$, Algorithm \ref{alg:iterative-binarysearch} can find 
    $x'_1$
    satisfying $|x_1^{*}-x_1^{\prime}|\leq2^{-b^{\prime}}$ in $O(n^5\cdot b^2)$ time. Furthermore we can modify this algorithm to find $|x_2^{*}-x_2^{\prime}|\leq2^{-b^{\prime}}$ in the same time complexity.
\end{lemma}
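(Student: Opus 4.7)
The plan is to combine two ingredients: (i) correctness of the sign-based bisection in Algorithm \ref{alg:iterative-binarysearch}, which will follow from the endpoint values of the one-dimensional function $\phi(\lambda) := g_0(E_1'(\lambda)) - g_1(E_1'(\lambda))$ together with uniqueness of its zero; and (ii) a per-iteration cost of $O(n^5 b)$, so that $O(b')$ halvings yield the claimed $O(n^5 b^2)$ bound. The main obstacle is (i): one must show that the ``\textbf{if} $e_0 < e_1$'' branch always preserves the invariant $x_1^* \in [l,r]$, which requires ruling out any sign flip of $\phi$ away from $x_1^*$.

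For (i) I would first observe that $\phi$ is continuous on $[0,1]$: $E_1$ is piecewise linear continuous by Lemma~\ref{lemma:continuous}, and $g_0,g_1$ are continuous as pointwise minima of affine functions. Applying Lemma~\ref{lemma:poncare miranda} with $k=1$ at $\x = E_1'(0)$ gives $\phi(0)\le 0$, and at $\x = E_1'(1)$ gives $\phi(1)\ge 0$. Any $\lambda$ with $\phi(\lambda)=0$ corresponds to a point where $g_0=g_1=g_2$ (since $g_0=g_2$ holds on $E_1$ by construction), and Corollary~\ref{cor:gcor} tells us this triple-equality holds at a unique $\x^*$; hence $x_1^*$ is the only zero of $\phi$. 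Together with continuity, this forces $\phi(\lambda)\le 0$ for $\lambda\le x_1^*$ and $\phi(\lambda)\ge 0$ for $\lambda\ge x_1^*$, since otherwise the intermediate value theorem would produce a second zero. Consequently, the update $l \gets \mathrm{mid}$ when $e_0<e_1$, and $r\gets\mathrm{mid}$ otherwise, always keeps $x_1^*$ inside $[l,r]$, and after $b'$ halvings we have $r-l = 2^{-b'}$, giving $|x_1'-x_1^*|\le 2^{-b'}$.

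For (ii), note that $\mathrm{mid}$, starting from $\{0,1\}$, is a dyadic rational with at most $s$ bits after $s$ halvings, so throughout the loop $\mathrm{mid}$ has $O(b'){=}O(b)$ bits and Lemma~\ref{lemma:2d-binarysearch} applies. Each iteration then costs $O(n^5 b)$ for computing $p = E_1'(\mathrm{mid})$, plus $O(n^5)$ for each of the two $g$-evaluations at the resulting 2D point using the DP of \cite{golin2022speeding}, for a per-iteration total of $O(n^5 b)$. Summing over the $O(b')$ iterations yields $O(n^5 b^2)$.

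Finally, for the $x_2^*$ claim, I would define $E_2:[0,1]\to\mathbb{R}^3$ as the intersection curve of $g_0$ and $g_1$ on the hyperplane $x_2=\lambda$ (swapping the roles of coordinates $1$ and $2$), and run the same bisection comparing $g_0$ and $g_2$ at $E_2'(\mathrm{mid})$. The required endpoint signs now come from Lemma~\ref{lemma:poncare miranda} with $k=2$, uniqueness of the zero follows from the same Corollary~\ref{cor:gcor}, and the running-time argument is otherwise identical, yielding the same $O(n^5 b^2)$ bound.
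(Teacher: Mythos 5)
Your argument follows the paper's proof essentially verbatim: maintain the sign invariant $g_0 \le g_1$ at $l$ and $g_0 \ge g_1$ at $r$ using Lemma~\ref{lemma:poncare miranda} for initialization, observe that along $E_1$ we already have $g_0 = g_2$ so any zero of $\phi$ pins down the unique $\x^*$, and combine the $O(b')$ halvings with the $O(n^5 b)$ cost of each $E_1'(\mathrm{mid})$ call via Lemma~\ref{lemma:2d-binarysearch}. The only difference is that you spell out slightly more explicitly why the unique zero forces a monotone sign pattern for $\phi$ (so the bisection invariant cannot be broken), a step the paper leaves implicit, and both treat $x_2'$ by symmetrically swapping coordinates; this is a presentation refinement, not a different route.
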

\begin{proof}
    Included in the appendix.
\end{proof}
Let $b^{\prime}=14\cdot b +18$ and  $T_0^{\prime}$, $T_1^{\prime}$ and $T_2^{\prime}$ be the trees corresponding to $g_0(x_1^{\prime},x_2^{\prime})$, $g_1(x_1^{\prime},x_2^{\prime})$ and $g_2(x_1^{\prime},x_2^{\prime})$ respectively. 
From Proposition \ref{prop: big paper},
the hyperplanes corresponding to $T_0^{\prime}$, $T_1^{\prime}$ and $T_2^{\prime}$ intersect at exactly one point. This means that if $\x^{*}$  also lies on all three of these hyperplanes, then their intersection is $\x^{*}$ and therefore, from  Corollary 
\ref{cor:gcor},
$\big(T_0^{\prime}, T_1^{\prime}, T_2^{\prime}\big)$ forms an optimal answer to the AIFV-$3$ problem  and we are done.
\begin{lemma} \label{lemma: x* lies on Ti'}
    $\x^{*}$ lies on $f_0(\x,T_0^{\prime})$, 
    $f_1(\x,T_1^{\prime})$ and 
    $f_2(\x,T_2^{\prime})$.
\end{lemma}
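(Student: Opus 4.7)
The goal is to prove $f_i(\x^*,T_i')=y^*$ for each $i\in\{0,1,2\}$. Since $y^*=g_0(\x^*)=g_1(\x^*)=g_2(\x^*)$ and $g_i(\x^*)\le f_i(\x^*,T_i')$ by definition of the lower envelope, what actually needs to be shown is that $T_i'$ (a minimizer of $f_i(\x',\cdot)$ at the nearby point $\x'$) also happens to be a minimizer of $f_i(\x^*,\cdot)$. My plan is a pincer argument: a Lipschitz estimate will show that $T_i'$ is at worst $O(2^{-b'})$-suboptimal at $\x^*$, while a bit-length estimate will show that any nonzero suboptimality gap must be at least $2^{-O(b)}$. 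With $b'=14b+18$, the first bound is strictly smaller than the second, forcing equality.

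First I would fix $i \in \{0,1,2\}$ and let $T_i^{\ast}\in\mbbS_i$ be any minimizer of $f_i(\x^*,\cdot)$, so $f_i(\x^*,T_i^{\ast})=y^*$. For any $S\in\mbbS_i$, Definition~\ref{hp def} writes $f_i(\x,S)$ as an affine function of $\x$ whose coefficients are the $q_j(S)\in[0,1]$ (together with a single $-1$ coefficient when $i>0$). This gives the Lipschitz estimate $|f_i(\x^*,S)-f_i(\x',S)|\le C\,\|\x^*-\x'\|_\infty$ for a small absolute constant $C\le 2$. Together with the bound $\|\x^*-\x'\|_\infty\le 2^{-b'}$ from Lemma~\ref{lemma: binary search algorithm} and the optimality $f_i(\x',T_i')\le f_i(\x',T_i^{\ast})$ coming from the definition of $T_i'$, chaining yields
\[
y^*\le f_i(\x^*,T_i')\le f_i(\x',T_i')+C\,2^{-b'}\le f_i(\x',T_i^{\ast})+C\,2^{-b'}\le y^*+2C\,2^{-b'},
\]
so $0\le f_i(\x^*,T_i')-y^*\le 2C\cdot 2^{-b'}$.

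The second half is the bit-length argument. In the AIFV setting every $q_j(S_i)$ and every $\ell(S_i)$ is an integer multiple of $2^{-b}$. The point $\x^*$ is the unique solution of the $3\times 3$ linear system obtained by equating the three hyperplanes $f_i(\x,T_i^{\ast})=y$, $i=0,1,2$, whose coefficient matrix has entries of bit-length $O(b)$. Cramer's rule then expresses $x_1^*,x_2^*,y^*$ as ratios of $3\times 3$ determinants of such entries, so each is a rational with numerator and denominator of bit-length $O(b)$; substituting these into the also-$O(b)$-bit hyperplane $f_i(\cdot,T_i')$ shows that $f_i(\x^*,T_i')-y^*$ is a rational whose denominator is bounded by $2^{cb}$ for an explicit small constant $c$. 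Hence if $f_i(\x^*,T_i')-y^*\ne 0$, its absolute value is at least $2^{-cb}$; tracking $c$ through the expansion of the $3\times 3$ determinants and the subsequent substitutions is precisely what forces the explicit choice $b'=14b+18$.

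The main obstacle is this bit-length bookkeeping: one must verify that the common denominator of $f_i(\x^*,T_i')-y^*$ is bounded tightly enough (roughly $2^{14b+17}$) for the reciprocal to dominate $2C\cdot 2^{-(14b+18)}$. Once that accounting is done, the Lipschitz pincer closes and all three equalities $f_i(\x^*,T_i')=y^*$ for $i=0,1,2$ follow simultaneously.
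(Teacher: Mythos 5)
Your strategy is the same as the paper's: express $(x_1^*,x_2^*,y^*)$ via Cramer's rule to bound the denominator of any nonzero slack by $2^{-O(b)}$, and then show the slack introduced by moving from $\x^*$ to $\x'$ is smaller than that threshold — a pincer that forces the slack to vanish. The gap is in how you bound that slack. You chain two \emph{separate} Lipschitz estimates, one for $f_i(\cdot,T_i')$ and one for $f_i(\cdot,T_i^*)$. For $i\in\{1,2\}$ the $\ell_1$-norm of the gradient of $f_i(\cdot,S)$ is $|q_i(S)-1|+q_{3-i}(S)$, which can be arbitrarily close to $2$; chaining twice therefore gives $0\le f_i(\x^*,T_i')-y^*\le 2C\cdot2^{-b'}$ with $C$ as large as $2$, i.e.\ an upper bound of $4\cdot 2^{-(14b+18)}=2^{-(14b+16)}$. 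But the denominator bound obtained from Cramer's rule is exactly $2^{14b+16}$, so the lower bound on a nonzero slack is also $2^{-(14b+16)}$: your two estimates coincide rather than overlap, and the pincer does not close with the paper's choice $b'=14b+18$.

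The fix — and what the paper actually does — is to bound the \emph{difference} $f_i(\x,T_i')-f_i(\x,T_i^*)$ rather than each term separately. In that difference the $-x_i$ term cancels, leaving coefficients $q_j(T_i')-q_j(T_i^*)\in[-1,1]$ for $j\in\{1,2\}$, hence a Lipschitz constant of $2$ with respect to $\|\cdot\|_\infty$ \emph{for the difference} regardless of $i$. Combining the $\arg\min$ inequality at $\x'$ with this single estimate gives $f_i(\x^*,T_i')-y^*\le |x_1^*-x_1'|+|x_2^*-x_2'|\le 2\cdot 2^{-(14b+18)}=2^{-(14b+17)}$, strictly below $2^{-(14b+16)}$, which yields the contradiction. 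So keep your Cramer/pincer skeleton, but replace the two chained Lipschitz steps with one estimate on the difference of the two affine functions; otherwise you would have to enlarge $b'$ beyond the value fixed in the text.
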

\begin{proof}
    Included in the appendix
\end{proof}

To recap, 
Lemma \ref{lemma: binary search algorithm} shows us how to find 
appropriate $(x'_1,x'_2)$  in $O(n^5 b^2)$ time. The associated
$T_0^{\prime}$, $T_1^{\prime}$, $T_2^{\prime}$ can be found in $O(n^5)$ further time using the known DP algorithm. Lemma \ref{lemma: x* lies on Ti'} then ensures that this is a correct solution to the AIFV-$3$ problem.
Although a polynomial time solution for the AIFV-$3$ problem was previously presented in \cite{golinaifv-m}, that required using the Ellipsoid algorithm while the approach here only needed simple binary searches.

\section*{Acknowledgment}
Work of all authors partially supported by RGC CERG Grant 16212021.

\IEEEtriggeratref{4}



\bibliography{refrence}
\newpage
\appendix
\begin{center}
    \textbf{ AIFV-$m$ code definition}
\end{center}
[Note: This 2 page subsection with its definitions and accompanying examples and  diagrams has been copied, with permission, from \cite{golinaifv-m}.]

A binary AIFV-m code will be a sequence $(T_0, T_1, \cdots , T_{m-1})$ of $m$ binary
code trees satisfying Definitions \ref{def:aifv node-types} and \ref{def: binary aifv-m codes} below. Each $T_i$ contains $n$
codewords. Unlike in Huffman codes, codewords can be internal nodes.\\
\begin{definition} \label{def:aifv node-types}
    (Node Types in a Binary AIFV-m Code \cite{aifv_m}). Figure \ref{fig:node_typesX}.
Edges in an AIFV-m code tree are labelled as 0-edges or 1-edges. If node $v$
is connected to its child node $u$ via a 0-edge (1-edge) then $u$ is $v$’s 0-child
(1-child). We will often identify a node interchangeably with its associated
(code)word. For example $0^
210$ is the node reached by following the edges
0, 0, 1, 0 down from the root. Following \cite{aifv_m}, the nodes in AIFV-m code
trees can be classified as being exactly one of 3 different types:
\begin{itemize}
    \item Complete Nodes. A complete node has two children: a 0-child and a
1-child. A complete node has no source symbol assigned to it.
\item  Intermediate Nodes. A intermediate node has no source symbol assigned to it and has exactly one child. A intermediate node with
a 0-child is called a intermediate-0 node; with a 1-child is called a
intermediate-1 node
\item Master Nodes. A master node has an assigned source symbol and at
most one child node. Master nodes have associated degrees:
\begin{itemize}
    \item[$-$] a master node of degree $k = 0$ is a leaf.
    \item[$-$] a master node v of degree $k \geq 1$ is connected to its unique child by
a 0-edge. Furthermore it has exactly $k$ consecutive intermediate 0 nodes as its direct descendants, i.e., $v0^
t$
for $0 < t \leq k$ are
intermediate-0 nodes while $v0^{
k+1}$ is not a intermediate-0 node.
\end{itemize}
\end{itemize}
\end{definition}
\begin{figure} [t] 
    \centering

    \tikzset{every picture/.style={line width=0.75pt}} 
    
    \begin{tikzpicture}[x=0.75pt,y=0.75pt,yscale=-1,xscale=1]
    
    \draw   (19.4,17.3) .. controls (19.4,9.95) and (25.35,4) .. (32.7,4) .. controls (40.05,4) and (46,9.95) .. (46,17.3) .. controls (46,24.65) and (40.05,30.6) .. (32.7,30.6) .. controls (25.35,30.6) and (19.4,24.65) .. (19.4,17.3) -- cycle ;
    \draw   (1.4,49.3) .. controls (1.4,41.95) and (7.35,36) .. (14.7,36) .. controls (22.05,36) and (28,41.95) .. (28,49.3) .. controls (28,56.65) and (22.05,62.6) .. (14.7,62.6) .. controls (7.35,62.6) and (1.4,56.65) .. (1.4,49.3) -- cycle ;
    \draw   (39.4,48.3) .. controls (39.4,40.95) and (45.35,35) .. (52.7,35) .. controls (60.05,35) and (66,40.95) .. (66,48.3) .. controls (66,55.65) and (60.05,61.6) .. (52.7,61.6) .. controls (45.35,61.6) and (39.4,55.65) .. (39.4,48.3) -- cycle ;
    \draw   (84.4,14.3) .. controls (84.4,6.95) and (90.35,1) .. (97.7,1) .. controls (105.05,1) and (111,6.95) .. (111,14.3) .. controls (111,21.65) and (105.05,27.6) .. (97.7,27.6) .. controls (90.35,27.6) and (84.4,21.65) .. (84.4,14.3) -- cycle ;
    \draw   (84.4,53.3) .. controls (84.4,45.95) and (90.35,40) .. (97.7,40) .. controls (105.05,40) and (111,45.95) .. (111,53.3) .. controls (111,60.65) and (105.05,66.6) .. (97.7,66.6) .. controls (90.35,66.6) and (84.4,60.65) .. (84.4,53.3) -- cycle ;
    \draw    (97.7,27.6) -- (97.7,40) ;
    \draw   (124.4,13.3) .. controls (124.4,5.95) and (130.35,0) .. (137.7,0) .. controls (145.05,0) and (151,5.95) .. (151,13.3) .. controls (151,20.65) and (145.05,26.6) .. (137.7,26.6) .. controls (130.35,26.6) and (124.4,20.65) .. (124.4,13.3) -- cycle ;
    \draw   (124.4,52.3) .. controls (124.4,44.95) and (130.35,39) .. (137.7,39) .. controls (145.05,39) and (151,44.95) .. (151,52.3) .. controls (151,59.65) and (145.05,65.6) .. (137.7,65.6) .. controls (130.35,65.6) and (124.4,59.65) .. (124.4,52.3) -- cycle ;
    \draw    (137.7,26.6) -- (137.7,39) ;
    \draw   (163.4,34.3) .. controls (163.4,26.95) and (169.35,21) .. (176.7,21) .. controls (184.05,21) and (190,26.95) .. (190,34.3) .. controls (190,41.65) and (184.05,47.6) .. (176.7,47.6) .. controls (169.35,47.6) and (163.4,41.65) .. (163.4,34.3) -- cycle ;
    \draw   (199.4,16.3) .. controls (199.4,8.95) and (205.35,3) .. (212.7,3) .. controls (220.05,3) and (226,8.95) .. (226,16.3) .. controls (226,23.65) and (220.05,29.6) .. (212.7,29.6) .. controls (205.35,29.6) and (199.4,23.65) .. (199.4,16.3) -- cycle ;
    \draw   (199.4,55.3) .. controls (199.4,47.95) and (205.35,42) .. (212.7,42) .. controls (220.05,42) and (226,47.95) .. (226,55.3) .. controls (226,62.65) and (220.05,68.6) .. (212.7,68.6) .. controls (205.35,68.6) and (199.4,62.65) .. (199.4,55.3) -- cycle ;
    \draw    (212.7,29.6) -- (212.7,42) ;
    \draw   (199.4,94.3) .. controls (199.4,86.95) and (205.35,81) .. (212.7,81) .. controls (220.05,81) and (226,86.95) .. (226,94.3) .. controls (226,101.65) and (220.05,107.6) .. (212.7,107.6) .. controls (205.35,107.6) and (199.4,101.65) .. (199.4,94.3) -- cycle ;
    \draw    (212.7,68.6) -- (212.7,81) ;
    \draw   (243.4,15.3) .. controls (243.4,7.95) and (249.35,2) .. (256.7,2) .. controls (264.05,2) and (270,7.95) .. (270,15.3) .. controls (270,22.65) and (264.05,28.6) .. (256.7,28.6) .. controls (249.35,28.6) and (243.4,22.65) .. (243.4,15.3) -- cycle ;
    \draw   (243.4,54.3) .. controls (243.4,46.95) and (249.35,41) .. (256.7,41) .. controls (264.05,41) and (270,46.95) .. (270,54.3) .. controls (270,61.65) and (264.05,67.6) .. (256.7,67.6) .. controls (249.35,67.6) and (243.4,61.65) .. (243.4,54.3) -- cycle ;
    \draw    (256.7,28.6) -- (256.7,41) ;
    \draw   (243.4,94.3) .. controls (243.4,86.95) and (249.35,81) .. (256.7,81) .. controls (264.05,81) and (270,86.95) .. (270,94.3) .. controls (270,101.65) and (264.05,107.6) .. (256.7,107.6) .. controls (249.35,107.6) and (243.4,101.65) .. (243.4,94.3) -- cycle ;
    \draw   (243.4,133.3) .. controls (243.4,125.95) and (249.35,120) .. (256.7,120) .. controls (264.05,120) and (270,125.95) .. (270,133.3) .. controls (270,140.65) and (264.05,146.6) .. (256.7,146.6) .. controls (249.35,146.6) and (243.4,140.65) .. (243.4,133.3) -- cycle ;
    \draw    (256.7,107.6) -- (256.7,120) ;
    \draw    (256.7,67.6) -- (256.7,81) ;
    \draw    (22,25.6) -- (14.7,36) ;
    \draw    (44,25.6) -- (52.7,35) ;
    
    \draw (26,9.4) node [anchor=north west][inner sep=0.75pt]    {$C$};
    \draw (91,6.4) node [anchor=north west][inner sep=0.75pt]    {$I_{0}$};
    \draw (132,5.4) node [anchor=north west][inner sep=0.75pt]    {$I_{1}$};
    \draw (165,26.4) node [anchor=north west][inner sep=0.75pt]    {$M_{0}$};
    \draw (201,6.4) node [anchor=north west][inner sep=0.75pt]    {$M_{1}$};
    \draw (206,47.4) node [anchor=north west][inner sep=0.75pt]    {$I_{0}$};
    \draw (200,85.4) node [anchor=north west][inner sep=0.75pt]    {$\neg I_{0}$};
    \draw (245,6.4) node [anchor=north west][inner sep=0.75pt]    {$M_{2}$};
    \draw (250,47.4) node [anchor=north west][inner sep=0.75pt]    {$I_{0}$};
    \draw (250,86.4) node [anchor=north west][inner sep=0.75pt]    {$I_{0}$};
    \draw (244,124.4) node [anchor=north west][inner sep=0.75pt]    {$\neg I_{0}$};
    \draw (7,20.4) node [anchor=north west][inner sep=0.75pt]    {$0$};
    \draw (83,26.4) node [anchor=north west][inner sep=0.75pt]    {$0$};
    \draw (200,27.4) node [anchor=north west][inner sep=0.75pt]    {$0$};
    \draw (198,67.4) node [anchor=north west][inner sep=0.75pt]    {$0$};
    \draw (241,27.4) node [anchor=north west][inner sep=0.75pt]    {$0$};
    \draw (240,66.4) node [anchor=north west][inner sep=0.75pt]    {$0$};
    \draw (241,106.4) node [anchor=north west][inner sep=0.75pt]    {$0$};
    \draw (53,15.7) node [anchor=north west][inner sep=0.75pt]    {$1$};
    \draw (123,26.4) node [anchor=north west][inner sep=0.75pt]    {$1$};

    \end{tikzpicture}

    \caption{Node types in a binary AIFV-$3$ code tree: complete node ($C$), intermediate-$0$ and intermediate-$1$ nodes ($I_0$, $I_1$), master nodes of degrees $0, 1, 2$ ($M_0$, $M_1$, $M_2$), ($M_0$ is a leaf)and non-intermediate-$0$ nodes ($\neg{I_0}$). The $ \neg {I_0}$ nodes can be complete, master, or intermediate-$1$ nodes, depending upon their location in the tree.}
    \label{fig:node_typesX}
\end{figure}
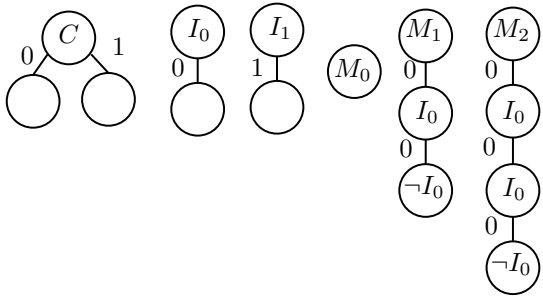

Binary AIFV-m codes are now defined as follows:
\begin{definition} \label{def: binary aifv-m codes}
    (Binary AIFV-m Codes \cite{aifv_m}). See Figure \ref{fig:example_aifv_3X}. Let $m \geq 2$ be
a positive integer. A binary AIFV-m code is an ordered m-tuple of m code
trees $(T_0, T_1, \cdots , T_{m-1})$ satisfying the following conditions:
\begin{enumerate}
    \item Every node in each code tree is either a complete node, a intermediate
node, or a master node of degree k where $0 \leq k < m$.
    \item For $k \geq 1$, the code tree $T_k$ has a intermediate-1 node connected to the root by exactly k 0-edges, i.e., the node $0^k$ is a intermediate-1 node.
\end{enumerate}

\begin{figure} [t] 
    \centering
    
    \tikzset{every picture/.style={line width=0.75pt}} 
    
    \begin{tikzpicture}[x=0.75pt,y=0.75pt,yscale=-1,xscale=1]
    
    \draw   (23,29.5) .. controls (23,25.91) and (25.91,23) .. (29.5,23) .. controls (33.09,23) and (36,25.91) .. (36,29.5) .. controls (36,33.09) and (33.09,36) .. (29.5,36) .. controls (25.91,36) and (23,33.09) .. (23,29.5) -- cycle ;
    \draw   (-1,56.5) .. controls (-1,49.6) and (4.6,44) .. (11.5,44) .. controls (18.4,44) and (24,49.6) .. (24,56.5) .. controls (24,63.4) and (18.4,69) .. (11.5,69) .. controls (4.6,69) and (-1,63.4) .. (-1,56.5) -- cycle ;
    \draw   (5,87.5) .. controls (5,83.91) and (7.91,81) .. (11.5,81) .. controls (15.09,81) and (18,83.91) .. (18,87.5) .. controls (18,91.09) and (15.09,94) .. (11.5,94) .. controls (7.91,94) and (5,91.09) .. (5,87.5) -- cycle ;
    \draw   (40,86.5) .. controls (40,82.91) and (42.91,80) .. (46.5,80) .. controls (50.09,80) and (53,82.91) .. (53,86.5) .. controls (53,90.09) and (50.09,93) .. (46.5,93) .. controls (42.91,93) and (40,90.09) .. (40,86.5) -- cycle ;
    \draw   (34,55.5) .. controls (34,48.6) and (39.6,43) .. (46.5,43) .. controls (53.4,43) and (59,48.6) .. (59,55.5) .. controls (59,62.4) and (53.4,68) .. (46.5,68) .. controls (39.6,68) and (34,62.4) .. (34,55.5) -- cycle ;
    \draw   (40,111.5) .. controls (40,107.91) and (42.91,105) .. (46.5,105) .. controls (50.09,105) and (53,107.91) .. (53,111.5) .. controls (53,115.09) and (50.09,118) .. (46.5,118) .. controls (42.91,118) and (40,115.09) .. (40,111.5) -- cycle ;
    \draw   (-1,119.5) .. controls (-1,112.6) and (4.6,107) .. (11.5,107) .. controls (18.4,107) and (24,112.6) .. (24,119.5) .. controls (24,126.4) and (18.4,132) .. (11.5,132) .. controls (4.6,132) and (-1,126.4) .. (-1,119.5) -- cycle ;
    \draw   (34,143.5) .. controls (34,136.6) and (39.6,131) .. (46.5,131) .. controls (53.4,131) and (59,136.6) .. (59,143.5) .. controls (59,150.4) and (53.4,156) .. (46.5,156) .. controls (39.6,156) and (34,150.4) .. (34,143.5) -- cycle ;
    \draw    (23,29.5) -- (11.5,44) ;
    \draw    (36,29.5) -- (46.5,43) ;
    \draw    (11.5,81) -- (11.5,69) ;
    \draw    (11.5,107) -- (11.5,94) ;
    \draw    (46.5,80) -- (46.5,68) ;
    \draw    (46.5,105) -- (46.5,93) ;
    \draw    (46.5,131) -- (46.5,118) ;
    \draw   (99,24.5) .. controls (99,20.91) and (101.91,18) .. (105.5,18) .. controls (109.09,18) and (112,20.91) .. (112,24.5) .. controls (112,28.09) and (109.09,31) .. (105.5,31) .. controls (101.91,31) and (99,28.09) .. (99,24.5) -- cycle ;
    \draw   (72,53.5) .. controls (72,49.91) and (74.91,47) .. (78.5,47) .. controls (82.09,47) and (85,49.91) .. (85,53.5) .. controls (85,57.09) and (82.09,60) .. (78.5,60) .. controls (74.91,60) and (72,57.09) .. (72,53.5) -- cycle ;
    \draw   (89,72.5) .. controls (89,68.91) and (91.91,66) .. (95.5,66) .. controls (99.09,66) and (102,68.91) .. (102,72.5) .. controls (102,76.09) and (99.09,79) .. (95.5,79) .. controls (91.91,79) and (89,76.09) .. (89,72.5) -- cycle ;
    \draw   (129,79.5) .. controls (129,75.91) and (131.91,73) .. (135.5,73) .. controls (139.09,73) and (142,75.91) .. (142,79.5) .. controls (142,83.09) and (139.09,86) .. (135.5,86) .. controls (131.91,86) and (129,83.09) .. (129,79.5) -- cycle ;
    \draw   (129,103.5) .. controls (129,99.91) and (131.91,97) .. (135.5,97) .. controls (139.09,97) and (142,99.91) .. (142,103.5) .. controls (142,107.09) and (139.09,110) .. (135.5,110) .. controls (131.91,110) and (129,107.09) .. (129,103.5) -- cycle ;
    \draw   (66,98.5) .. controls (66,91.6) and (71.6,86) .. (78.5,86) .. controls (85.4,86) and (91,91.6) .. (91,98.5) .. controls (91,105.4) and (85.4,111) .. (78.5,111) .. controls (71.6,111) and (66,105.4) .. (66,98.5) -- cycle ;
    \draw   (100,98.5) .. controls (100,91.6) and (105.6,86) .. (112.5,86) .. controls (119.4,86) and (125,91.6) .. (125,98.5) .. controls (125,105.4) and (119.4,111) .. (112.5,111) .. controls (105.6,111) and (100,105.4) .. (100,98.5) -- cycle ;
    \draw   (123,49.5) .. controls (123,42.6) and (128.6,37) .. (135.5,37) .. controls (142.4,37) and (148,42.6) .. (148,49.5) .. controls (148,56.4) and (142.4,62) .. (135.5,62) .. controls (128.6,62) and (123,56.4) .. (123,49.5) -- cycle ;
    \draw   (123,133.5) .. controls (123,126.6) and (128.6,121) .. (135.5,121) .. controls (142.4,121) and (148,126.6) .. (148,133.5) .. controls (148,140.4) and (142.4,146) .. (135.5,146) .. controls (128.6,146) and (123,140.4) .. (123,133.5) -- cycle ;
    \draw    (99,24.5) -- (78.5,47) ;
    \draw    (112,24.5) -- (135.5,37) ;
    \draw    (85,53.5) -- (95.5,66) ;
    \draw    (89,72.5) -- (78.5,86) ;
    \draw    (102,72.5) -- (112.5,86) ;
    \draw    (135.5,62) -- (135.5,73) ;
    \draw    (135.5,86) -- (135.5,97) ;
    \draw    (135.5,110) -- (135.5,121) ;
    \draw   (189,100.5) .. controls (189,96.91) and (191.91,94) .. (195.5,94) .. controls (199.09,94) and (202,96.91) .. (202,100.5) .. controls (202,104.09) and (199.09,107) .. (195.5,107) .. controls (191.91,107) and (189,104.09) .. (189,100.5) -- cycle ;
    \draw   (169,65) .. controls (169,61.41) and (171.91,58.5) .. (175.5,58.5) .. controls (179.09,58.5) and (182,61.41) .. (182,65) .. controls (182,68.59) and (179.09,71.5) .. (175.5,71.5) .. controls (171.91,71.5) and (169,68.59) .. (169,65) -- cycle ;
    \draw   (169,87.5) .. controls (169,83.91) and (171.91,81) .. (175.5,81) .. controls (179.09,81) and (182,83.91) .. (182,87.5) .. controls (182,91.09) and (179.09,94) .. (175.5,94) .. controls (171.91,94) and (169,91.09) .. (169,87.5) -- cycle ;
    \draw   (212,116.5) .. controls (212,112.91) and (214.91,110) .. (218.5,110) .. controls (222.09,110) and (225,112.91) .. (225,116.5) .. controls (225,120.09) and (222.09,123) .. (218.5,123) .. controls (214.91,123) and (212,120.09) .. (212,116.5) -- cycle ;
    \draw   (157,122.5) .. controls (157,115.6) and (162.6,110) .. (169.5,110) .. controls (176.4,110) and (182,115.6) .. (182,122.5) .. controls (182,129.4) and (176.4,135) .. (169.5,135) .. controls (162.6,135) and (157,129.4) .. (157,122.5) -- cycle ;
    \draw   (228,139.5) .. controls (228,132.6) and (233.6,127) .. (240.5,127) .. controls (247.4,127) and (253,132.6) .. (253,139.5) .. controls (253,146.4) and (247.4,152) .. (240.5,152) .. controls (233.6,152) and (228,146.4) .. (228,139.5) -- cycle ;
    \draw   (183,138.5) .. controls (183,131.6) and (188.6,126) .. (195.5,126) .. controls (202.4,126) and (208,131.6) .. (208,138.5) .. controls (208,145.4) and (202.4,151) .. (195.5,151) .. controls (188.6,151) and (183,145.4) .. (183,138.5) -- cycle ;
    \draw   (163,34.5) .. controls (163,27.6) and (168.6,22) .. (175.5,22) .. controls (182.4,22) and (188,27.6) .. (188,34.5) .. controls (188,41.4) and (182.4,47) .. (175.5,47) .. controls (168.6,47) and (163,41.4) .. (163,34.5) -- cycle ;
    \draw    (175.5,71.5) -- (175.5,81) ;
    \draw    (175.5,47) -- (175.5,58.5) ;
    \draw    (182,87.5) -- (195.5,94) ;
    \draw    (202,100.5) -- (218.5,110) ;
    \draw    (189,100.5) -- (169.5,110) ;
    \draw    (212,116.5) -- (195.5,126) ;
    \draw    (225,116.5) -- (240.5,127) ;
    
    \draw (6,50.4) node [anchor=north west][inner sep=0.75pt]    {$a$};
    \draw (42,48.4) node [anchor=north west][inner sep=0.75pt]    {$b$};
    \draw (6,115.4) node [anchor=north west][inner sep=0.75pt]    {$c$};
    \draw (40,137.4) node [anchor=north west][inner sep=0.75pt]    {$d$};
    \draw (131,44.4) node [anchor=north west][inner sep=0.75pt]    {$a$};
    \draw (73,91.4) node [anchor=north west][inner sep=0.75pt]    {$b$};
    \draw (108,92.4) node [anchor=north west][inner sep=0.75pt]    {$c$};
    \draw (130,125.4) node [anchor=north west][inner sep=0.75pt]    {$d$};
    \draw (170,28.4) node [anchor=north west][inner sep=0.75pt]    {$a$};
    \draw (164,116.4) node [anchor=north west][inner sep=0.75pt]    {$b$};
    \draw (190,133.4) node [anchor=north west][inner sep=0.75pt]    {$c$};
    \draw (235,132.4) node [anchor=north west][inner sep=0.75pt]    {$d$};
    \draw (6,27.4) node [anchor=north west][inner sep=0.75pt]    {$0$};
    \draw (-1,68.4) node [anchor=north west][inner sep=0.75pt]    {$0$};
    \draw (-1,94.4) node [anchor=north west][inner sep=0.75pt]    {$0$};
    \draw (32,67.4) node [anchor=north west][inner sep=0.75pt]    {$0$};
    \draw (33,91.4) node [anchor=north west][inner sep=0.75pt]    {$0$};
    \draw (34,118.4) node [anchor=north west][inner sep=0.75pt]    {$0$};
    \draw (77,24.4) node [anchor=north west][inner sep=0.75pt]    {$0$};
    \draw (72,70.4) node [anchor=north west][inner sep=0.75pt]    {$0$};
    \draw (124,60.4) node [anchor=north west][inner sep=0.75pt]    {$0$};
    \draw (125,83.4) node [anchor=north west][inner sep=0.75pt]    {$0$};
    \draw (125,111.4) node [anchor=north west][inner sep=0.75pt]    {$0$};
    \draw (165,44.4) node [anchor=north west][inner sep=0.75pt]    {$0$};
    \draw (163,68.4) node [anchor=north west][inner sep=0.75pt]    {$0$};
    \draw (179,103.4) node [anchor=north west][inner sep=0.75pt]    {$0$};
    \draw (206,119.4) node [anchor=north west][inner sep=0.75pt]    {$0$};
    \draw (42,24.4) node [anchor=north west][inner sep=0.75pt]    {$1$};
    \draw (93,46.4) node [anchor=north west][inner sep=0.75pt]    {$1$};
    \draw (124,16.4) node [anchor=north west][inner sep=0.75pt]    {$1$};
    \draw (108,68.4) node [anchor=north west][inner sep=0.75pt]    {$1$};
    \draw (187,76.4) node [anchor=north west][inner sep=0.75pt]    {$1$};
    \draw (209,90.4) node [anchor=north west][inner sep=0.75pt]    {$1$};
    \draw (233,108.4) node [anchor=north west][inner sep=0.75pt]    {$1$};
    \draw (21,2.4) node [anchor=north west][inner sep=0.75pt]    {$T_{0}$};
    \draw (100,-0.6) node [anchor=north west][inner sep=0.75pt]    {$T_{1}$};
    \draw (166,0.4) node [anchor=north west][inner sep=0.75pt]    {$T_{2}$};

    \end{tikzpicture}

    \caption{Example binary AIFV-$3$ code for source alphabet $\left\{ a,  b,  c,  d\right\}.$ The small nodes are either complete or intermediate \ nodes, while the large nodes are master nodes with their assigned source symbols. Note that $T_2$ encodes $\hat a$, which is at its  root, with an empty string!}
    \label{fig:example_aifv_3X}
\end{figure}
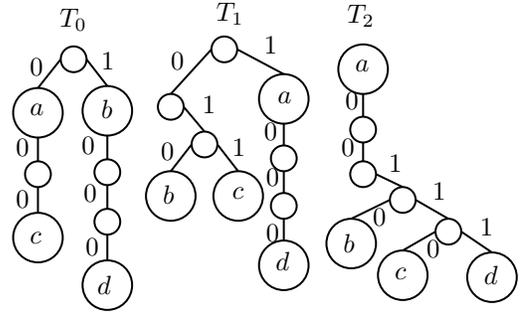
\end{definition}
Consequences of the Definitions:
\begin{enumerate}[label=\alph*)]
    \item Every leaf of a code tree must be a master node of degree 0. In particular, this implies that every code tree contains at least one master node of degree 0.
    \item Definition \ref{def: binary aifv-m codes}, and in particular Condition (2), result in unique decodability (proven in \cite{aifv_m}).
    \item  For $k  \neq 1$, the root of a $T_k$ tree is permitted to be a master node. If a root is a master node, the associated codeword is the empty string (Figure \ref{fig:example_encodingX})! The root of a $T_1$ tree cannot be a master node.
    \item The root of a $T_k$ tree may be a intermediate-0 node.
    \item For $k > 0$, every $T_k$ tree must contain at least one intermediate-1 node, the node $0^k$. A $T_0$ tree might not contain any intermediate-1 node.
For $k > 0$, the root of a $T_k$ tree cannot be a intermediate-1 node. The
root of a $T_0$ tree is permitted to be a intermediate-1 node.
\end{enumerate}
We now describe the encoding and decoding procedures. These are illustrated in Figures \ref{fig:example_encodingX} and \ref{fig:example_decodingX} which demonstrate the unique decodability
property of binary AIFV-m codes.\\
\newline
\textbf{Procedure 1}(Encoding of a Binary AIFV-m Code). A source sequence
$\alpha_1,\alpha_2 \cdots$is encoded as follows: Set $T = T_0$ and $i = 1$.
\begin{enumerate}[leftmargin=*]
\item Encode $\alpha_i$ using $T$
\item Let k be the index such that $\alpha_i$ is encoded using a degree-k master node in $T$
\item Set $T = T_k$; $i = i + 1$
\item Goto line 1
\end{enumerate}
\begin{figure}[htbp]
    \centering
    \begin{tikzpicture}[thick]
        \node (T1) at (0,0) {$T_0$};
        \node (T2) at (2,0) {$T_0$};
        \node (T3) at (4,0) {$T_2$};
        \node (T4) at (6,0) {$T_1$};
        \node (s1) at (0,1) {$ c$\vphantom{$ {abcd}$}};
        \node (s2) at (2,1) {$ b$\vphantom{$ {abcd}$}};
        \node (s3) at (4,1) {$ a$\vphantom{$ {abcd}$}};
        \node (s4) at (6,1) {$ b$\vphantom{$ {abcd}$}};
        \node (c1) at (0,-1) {$000$};
        \node (c2) at (2,-1) {$1$};
        \node (c3) at (4,-1) {$\epsilon$};
        \node (c4) at (6,-1) {$010$};
        \draw[->] (T1) -- (T2);
        \draw[->] (T2) -- (T3);
        \draw[->] (T3) -- (T4);
        \draw[->] (s1) -- (T1);
        \draw[->] (s2) -- (T2);
        \draw[->] (s3) -- (T3);
        \draw[->] (s4) -- (T4);
        \draw[->] (T1) -- (c1);
        \draw[->] (T2) -- (c2);
        \draw[->] (T3) -- (c3);
        \draw[->] (T4) -- (c4);
    \end{tikzpicture}
    \caption{Encoding $ c  b  a  b$ using the binary AIFV-$3$ code in Figure \ref{fig:example_aifv_3X}.
    The first $ c$ is encoded using a degree-$0$ master node (leaf) in $T_0$ so the first $ b$ is encoded using $T_0.$  This $ b$ is encoded using a degree-$2$ master node, so  $ a$ is encoded using $T_2.$ $ a$ is encoded using a degree-$1$ master node, so the second $ b$ is  encoded using $T_1\ldots.$}
    \label{fig:example_encodingX}
\end{figure}
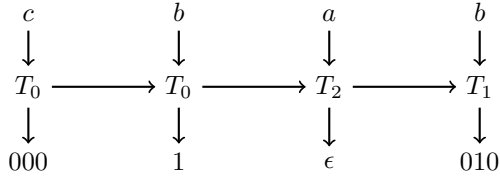

\textbf{Procedure 2} (Decoding of a Binary AIFV-m Code). Let $\beta$ be a binary
string that is the encoded message. Set $T = T_0$ and $i = 1$.
\begin{enumerate}[leftmargin=*]
\item Let $\beta_i$ be longest prefix of $\beta$ that corresponds to a path from the root of $T$
to some master node $M$ in $T$
\item Let $k$ be the degree of $M$ (as a master node) in $T$
\item Set $\alpha_i$ to be the source symbol assigned to $\beta_i$ in $T$
\item Remove $\beta_i$ from the start of $\beta$
\item Set $T = T_k$; $i = i + 1$
\item Goto line 1
\end{enumerate}

\begin{figure}[htbp]
    \centering
    \begin{tikzpicture}[thick]
        \node (T1) at (0,0) {$T_0$};
        \node (T2) at (2,0) {$T_0$};
        \node (T3) at (4,0) {$T_2$};
        \node (T4) at (6,0) {$T_1$};
        \node (c1) at (0,1) {$0001010$};
        \node (c2) at (2,1) {\sout{$000$}$1010$};
        \node (c3) at (4,1) {\sout{$0001$}$010$};
        \node (c4) at (6,1) {\sout{$0001$}$010$};
        \node (s1) at (0,-1) {$ c$\vphantom{$ {abcd}$}};
        \node (s2) at (2,-1) {$ b$\vphantom{$ {abcd}$}};
        \node (s3) at (4,-1) {$ a$\vphantom{$ {abcd}$}};
        \node (s4) at (6,-1) {$ b$\vphantom{$ {abcd}$}};
        \draw[->] (T1) -- (T2);
        \draw[->] (T2) -- (T3);
        \draw[->] (T3) -- (T4);
        \draw[->] (c1) -- (T1);
        \draw[->] (c2) -- (T2);
        \draw[->] (c3) -- (T3);
        \draw[->] (c4) -- (T4);
        \draw[->] (T1) -- (s1);
        \draw[->] (T2) -- (s2);
        \draw[->] (T3) -- (s3);
        \draw[->] (T4) -- (s4);
    \end{tikzpicture}
    \caption{Decoding $0001010$ using the binary AIFV-$3$ code in Figure \ref{fig:example_aifv_3X}.}
    \label{fig:example_decodingX}
\end{figure}
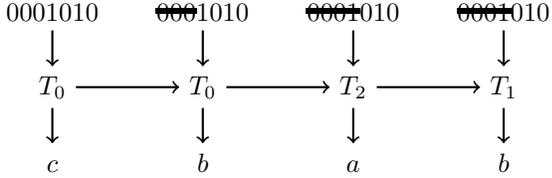

\begin{theorem}
    (\cite{aifv_m}, Theorem 3). Binary AIFV-m codes are uniquely decodable with delay at most $m$.
\end{theorem}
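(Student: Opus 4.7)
The plan is to prove unique decodability and the $m$-symbol delay bound simultaneously, by induction on the index of the symbol being decoded, with the key structural lever being condition~(2) of Definition~\ref{def: binary aifv-m codes}: in tree $T_k$, the node $0^k$ is an intermediate-$1$ node, so any path of $k$ leading $0$-edges from the root of $T_k$ must be immediately followed by a $1$-edge. I would first establish the invariant that when the decoder is about to recover $\alpha_i$, it already knows which tree $T$ is in use. The base case is $T=T_0$ by convention, and the inductive step follows because Procedure~1 switches to $T_k$ exactly when the previous symbol was emitted at a degree-$k$ master node, a fact the decoder has already observed at the previous step.

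Next I would describe the resolution of a single step. Walking down $T$ bit by bit from the root, the decoder cannot be confused at complete or intermediate nodes, since each has a unique forced continuation given the next bit in the stream. The only source of ambiguity is at a master node $M$ of degree $k\ge 1$: should $M$ itself serve as the codeword for $\alpha_i$ (in which case the next tree is $T_k$), or should decoding continue along the unique $0$-edge below $M$ toward a deeper codeword in $T$? To disambiguate, I would peek at the bits that would follow if the decoder stopped at $M$: those bits would then have to form the start of some codeword in $T_k$, and by condition~(2) any such prefix contains at most $k$ leading zeros followed by a $1$. Comparing this with the actual bit pattern in the stream (in particular, with the number of intermediate-$0$ descendants hanging off $M$ inside $T$) either forces continuation past $M$ or confirms the stop at $M$.

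The main obstacle, and the only place real work is required, is bounding the lookahead by $m$ source symbols rather than by bits. I would handle this with a monovariant argument: whenever resolving the ambiguity at a degree-$k$ master node requires peeking into the codeword of one future symbol, any further ambiguity arising at a master node of degree $k'$ inside that peeked codeword must satisfy $k' < k$, since otherwise the intermediate-$1$ guarantee at depth $k$ of the peeked tree would already dictate the outcome. The strictly decreasing chain $k > k' > k'' > \cdots$ drawn from $\{1,\ldots,m-1\}$ has length at most $m-1$, so the cumulative lookahead is bounded by $m$ source symbols. Uniqueness of the decoded output is then immediate because the resolution prescribed at every level is deterministic, so two decodings of the same stream must agree symbol by symbol.
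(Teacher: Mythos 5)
This theorem is quoted from \cite{aifv_m} (Theorem~3) and appears in this paper's appendix purely as cited background; the paper itself gives no proof, so there is no in-paper argument to compare yours against. Assessed on its own, your sketch identifies the right structural levers and is essentially sound. The invariant that the decoder always knows its current tree, and that the only source of ambiguity is a master node of degree $k\ge 1$, are both correct, and your disambiguation rule is the right one: a degree-$k$ master $M$ in $T$ has $k$ intermediate-$0$ descendants, so continuation forces the next $k+1$ bits to be $0^{k+1}$, while condition~(2) makes $0^k$ an intermediate-$1$ node of $T_k$, which rules out $0^{k+1}$ as a prefix of any $T_k$ codeword; hence the two cases are disjoint and decided by $k+1$ bits of lookahead. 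Your monovariant $k>k'>k''>\cdots$ is also correct, though the justification you give (``the intermediate-$1$ guarantee\ldots would already dictate the outcome'') is terse. The clean reason is: if a further ambiguity is hit, the decoder is still on the all-zeros path of the peeked tree $T_k$, so that master node is $M'=0^d$ for some $d$; its $k'$ intermediate-$0$ descendants $0^{d+1},\dots,0^{d+k'}$ must all lie strictly above the intermediate-$1$ node $0^k$, which forces $d+k'<k$ and thus $k'<k-d\le k$. With that the strictly decreasing chain through $\{1,\dots,m-1\}$ gives a delay bound of $m$ as you claim. The one spot that deserves a little more care in a full write-up is the bookkeeping between bit-level lookahead and symbol-level delay — in particular, roots that are master nodes yield zero-length codewords, so a fixed number of bits can span extra symbols — but the same decreasing-degree observation applied to master roots of $T_j$ for $j\ge 2$ (such a root must have degree $<j$) keeps that chain bounded as well, so this is a presentational gap rather than a logical one.
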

\begin{center}
    \textbf{ The cost of AIFV-m codes}
\end{center}

\begin{definition}
    Let $\mathcal{T}_k(m, n)$ denote the set of all possible type-k trees that
can appear in a binary AIFV-m code on $n$ source symbols. $T_k$ will be used
to denote a tree $T_k\in \mathcal{T}_k(m, n)$. Set
$\mathcal{T}(m,n) = \bigtimes_{i=0}^{m-1} \mathcal{T}_k(m, n)$.
\end{definition}
$\textbf{T} = (T_0, . . . , T_{m-1}) \in \mathcal{T}(m, n)$ will be a binary AIFV-m code.
\begin{definition}
Let $T_k \in \T_k(m, n)$ and $\sigma_i$ be a source symbol.
 \begin{itemize}
     \item $\ell(T_k, \sigma_i)$ denotes the length of the codeword in $T_k$ for $\sigma_i$.
     \item $d(T_k, \sigma_i)$ denotes the degree of the master node in $T_k$ assigned to $\sigma_i$.
     \item $\ell(T_k)$ denotes the average length of a codeword in $T_k$, i.e.,
     \begin{center}
         $\ell(T_k) =\sum_{i=1}^{n}
\ell(T_k, \sigma_i) \cdot p_i
$.
     \end{center}
     \item $M_j (T_k) = \{i \in \{1, 2, \cdots , n\} : d(T_k, \sigma_i) = j\}$ is the set of indices of source nodes that are assigned master nodes of degree $j$ in $T_k$. Set
     \begin{align*}
         \textbf{q}(T_k) = \big(q_0(T_k), \cdots , q_{m-1}(T_k)\big) \text{ where } \\ \forall j \in [m],q_j (T_k)= \sum_{i\in M_j(T_k)} pi.
         \end{align*}
 \end{itemize}
 $\sum_{j\in m}  q_j (T_k) = 1$
, so $\textbf{q}(T_k)$ is a probability distribution.\\
If a source symbol is encoded using a degree-j master node in $T_k$, then
the next source symbol will be encoded using code tree $T_j$. Since the source
is memoryless, the transition probability of encoding using code tree $T_j$ immediately after encoding using code tree $T_k$ is $q_j (T_k)$.\\

\begin{figure}[t]
        \centering
        \includegraphics[width=3in]{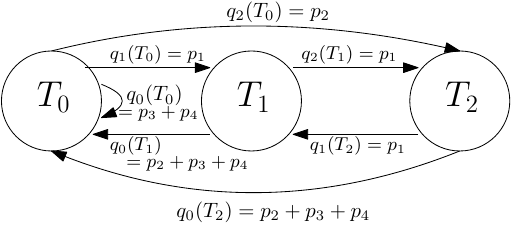}
        \caption{Markov chain corresponding to  AIFV-$3$ code in Figure \ref{fig:example_aifv_3X}. Note that $T_1$ contains no degree 1 master node, so there is no edge from $T_1$ to $T_1.$  Similarly, $T_2$ contains no degree 2 master node, so there is no edge from $T_2$ to $T_2.$}
        \label{fig:MarkovX}
        \end{figure}

This permits viewing the process as a Markov chain whose states are the
code trees. Figure \ref{fig:MarkovX} provides an example.\\
From Consequence (a) following Definition \ref{def: binary aifv-m codes}, $\forall k \in [m]$, every $T_k \in
\T_k(m, n)$ contains at least one leaf, so $q_0(T_k) > 0$. Thus, as described in
Section \ref{sec: Minimum Cost Markov Chain Problem} this implies that the associated Markov chain is a unichain whose
unique recurrence class contains $T_0$ and whose associated transition matrix
$Q$ has a unique stationary distribution $\boldsymbol{\pi}$.
\end{definition}
\begin{definition}
    Let $\textbf{T} = (T_0, \cdots , T_{m-1}) \in \T (m, n)$ be some AIFV-m code,
$\textbf{Q}(\textbf{T})$ be the transition matrix of the associated Markov chain and
\begin{center}
    $\boldsymbol{\pi}(\textbf{T}) = \big(\pi_0(\textbf{T}), \cdots , \pi_{m-1}(\textbf{T})\big)$
\end{center}
be $\textbf{Q}(\textbf{T})$’s associated unique stationary distribution. Then the average cost
of the code is the average length of an encoded symbol in the limit, i.e.,
\begin{center}
    $cost(\textbf{T}) = \sum_{k=0}^{m-1} \ell(T_k) \cdot \pi_k(\textbf{T})$.
\end{center}
\begin{definition}(The Binary AIFV-m Code problem). Construct a binary
AIFV-m code $T \in T (m, n)$ with minimum $cost(T)$, i.e.,
\begin{center}
    $cost(\textbf{T}) = \min\limits_{\textbf{T}^{\prime}\in \T (m,n)}cost(\textbf{T}^{\prime}).$
\end{center}
This problem is exactly the minimum-cost Markov Chain problem introduced in section Section \ref{sec: Minimum Cost Markov Chain Problem} with $\mbbS_k = \T_k(m, n)$. 
\end{definition}
\end{definition}

\eject

\begin{center}
    \textbf{Proofs of Lemmas and Corollaries}
\end{center}

\textbf{Proof of  Corollary \ref{cor:gcor}.}
With the exception of the uniqueness  the proof  follows almost directly from the definitions. The details are provided in Corollary 3.3. in \cite{golinaifv-m}.  Uniqueness follows from our Theorem \ref {prop: main proposition for uniqueness}.
\hfill \qed

\medskip

\textbf{Proof of Lemma \ref{lemma:continuous}.}
First define  set 
\begin{align*}
    g_1^{\prime}=\Big\{\x=(x_1,x_2,y) \in \mathbb{R}^3\mid  x_1\in[0,1] \\
      \text{ and } y = g_0\big((x_1,x_2)\big)=g_2\big((x_1,x_2)\big) \Big\}
\end{align*}
By definition, 
\begin{equation}
    g_1^{\prime}=\big\{E_1(x_1)\text{ }|\text{ }x_1\in[0,1]\big\}.
\end{equation}

    Since we know that $g_0$ and $g_2$ are piecewise linear, 
    $g_1^{\prime}$ is a compact set. 

    The proof that $E_1$ is continuous will  be by contradiction.
 %
  Suppose $E_1$ is not a continuous function. Then there exists $x_1\in[0,1]$ and $\epsilon\in\mathbb{R}^+$  along with a sequence of numbers $Z=\{z_1,z_2,...\}\subset [0,1]$ satisfying the following properties. The $z_i$ converge to $x_1$ but 
 $d\big(E_1(x_1),E_1(z_i)\big)\geq \epsilon$ for all $z_i\in Z$ where $d\big(E_1(x_1),E_1(z_i)\big)$ denotes the euclidean distance between $E_1(x_1)$ and $E_1(z_i)$. 
 
 Since $g_1^{\prime}$ is compact there must exist a subsequence of $\big\{E_1(z_1),E_1(z_2),...\big\}$ that converges to a point in $g_1^{\prime}$. Since the points in $Z$ converge to $x_1$  this subsequence must converge to a point in $g'_1$ whose first coordinate is $x_1$ 
 and the only point in $g_1$ that satisfies this property is $E_1(x_1)$. But this contradicts the fact that  $d\big(E_1(x_1),E_1(z_i)\big)\geq \epsilon$ for all $z_i\in Z$.  Thus $E_1$ is continuous.
 
   $g_0(x_1,x_2)$ and $g_2(x_1,x_2)$ are both piecewise linear functions. This immediately implies that $g'_1$ is composed of pairwise linear pieces and thus $E_1(x_1)$ is a piecewise linear function.
   \hfill \qed

\medskip

\textbf{Proof of Lemma \ref{lemma: binary search algorithm}.}
 We will first show how to find $x_1^{\prime}$ and then explain how finding $x_2^{\prime}$ can be done using a similar procedure.

 Suppose $l < r$, $g_0\big(E_1^{\prime}(l)\big)\leq g_1\big(E_1^{\prime}(l)\big)$, and $g_0\big(E_1^{\prime}(r)\big)\geq g_1\big(E_1^{\prime}(r)\big)$.
Since  $g_0$, $g_1$ 
are continuous,
 there exists $x_1\in[l,r]$ such that $g_0\big(E_1^{\prime}(x_1)\big)= g_1\big(E_1^{\prime}(x_1)\big)$. 
 Recall that from the definition of $E(x_1)$
  $g_0\big(E_1^{\prime}(x_1)\big)= g_2\big(E_1^{\prime}(x_1)\big).$
  Thus
  $$g_0\big(E_1^{\prime}(x_1)\big)
  =g_1\big(E_1^{\prime}(x_1)\big)
  =g_2\big(E_1^{\prime}(x_1)\big)
  $$
  so $E'(x_1)=(x^*_1,x^*_2)$ and,
  in particular, $x_1=x_1^{*}$.
  
From Lemma \ref{lemma:poncare miranda} we know that $g_0\big(E_1^{\prime}(0)\big)\leq g_1\big(E_1^{\prime}(0)\big)$ and $g_0\big(E_1^{\prime}(1)\big)\geq g_1\big(E_1^{\prime}(1)\big)$. Thus, after  initializing $l=0$ and $r=1,$ a binary  halving search  finds $x_1^{\prime}$ as close as needed to $x^{*}$. 
This is illustrated in Algorithm \ref{alg:iterative-binarysearch}.

The algorithm stops after $b^{\prime}$ iterations with  $r-l=2^{-b^{\prime}}$ and $x_1^{*}\in[l,r]$ so  $|x_1^{*}-x_1^{\prime}|\leq2^{-b^{\prime}}.$ Setting $l=x_1^{\prime}$ yields the  desired $x_1^{\prime}$.

Since  Algorithm  \ref{alg:iterative-binarysearch} runs for $b^{\prime}$ iterations and in each iteration calculating $E_1^{\prime}(\mbox{mid})$ needs $O(n^5\cdot b)$ operations from Lemma \ref{lemma:2d-binarysearch}, the  algorithm \ref{alg:iterative-binarysearch} has  total running time  of $O(n^5\cdot b \cdot b^{\prime})=O(n^5\cdot b^2)$. Note that  Lemma \ref{lemma:2d-binarysearch} can always be applied because 
``$\mbox{mid}$'' can always be written using $b^{\prime}$ bits.

We now see  that we can similarly find $x_2^{\prime}$ satisfying $|x_2^{*}-x_2^{\prime}|\leq2^{-b^{\prime}}$.  The argument is almost exactly the same as for $x'_1.$

Let $\lambda\in[0,1]$ be  fixed. 
 Now consider the intersection of a type-$0$ and type-$1$ hyperplanes with $h_{2,\lambda}$:
\begin{eqnarray}
    f_0((x_1,\lambda), S_0) &=& \ell(S_0) + q_1(S_0)\cdot x_1 +q_2(S_0)\cdot \lambda \nonumber\\
    f_1((x_1,\lambda), S_1) &=& \ell(S_1) + q_1(S_1)\cdot x_1 +q_2(S_1)\cdot \lambda - x_1 \nonumber \\
    &=& \ell(S_1) + \big(q_1(S_1)-1\big) + q_2(S_1)\cdot \lambda \nonumber \\
    &=& \ell(S_1)- \big(q_0(S_1)+q_2(S_1)\big)\cdot x_1 \nonumber \\
    &+& q_2(S_1)\cdot \lambda
\end{eqnarray}
Thus  the intersection of each type-0 hyperplane with $h_{2,\lambda}$ corresponds to a line with a non-negative slope and the intersection of each type-1 hyperplane with $h_{2,\lambda}$ corresponds to a line with a negative slope (because we know $q_0(S_i)$ is  always positive). 
Thus the intersection of $g_0$ and with $h_{2,\lambda}$ is the lower-envelope of non-negative slope lines and the intersection of $g_1$ and with $h_{2,\lambda}$ is the lower-envelope of negative slope lines. 

From this the argument follows exactly the same as for $x_1.$
and we can also  find $x_2^{\prime}$ satisfying $|x_2^{*}-x_2^{\prime}|\leq2^{-b^{\prime}}$ in $O(n^5\cdot b^2)$. \hfill\qed

\medskip


\textbf{Proof of Lemma \ref{lemma: x* lies on Ti'}.}
Set  $T_0^{*}$, $T_1^{*}$ and $T_2^{*}$ be the trees associated  
with $\x^{*}=(x_1^{*},x_2^{*},y^{*})$, the unique intersection of $g_0$, $g_1$ and $g_2$.

Following Definition \ref{hp def},
 the hyperplanes associated with  $T_0^{*}$, $T_1^{*}$ and $T_2^{*}$ have the  formulas
\begin{equation}
    y=\cost(T_0^{*})+q_1 (T_0^{*})\cdot x_1+q_2 (T_0^{*})\cdot x_2
\end{equation}
\begin{equation}
    y=\cost(T_1^{*})-\big(q_0 (T_1^{*})+q_2 (T_1 )\big)\cdot x_1+q_2 (T_1^{*}) \cdot x_2
\end{equation}
\begin{equation}
    y=\cost(T_2^{*})+q_1 (T_2^{*})\cdot x_1-\big(q_0 (T_2^{*})+q_1 (T_2^{*})\big)\cdot x_2
\end{equation}

This can be rewritten as 
\begin{equation} \label{eq:intersection formula}
-
\begin{bmatrix}
\cost(T_0^{*})\\
\cost(T_1^{*})\\
\cost(T_2^{*})
\end{bmatrix}
=
\begin{bmatrix}
1 & q_1 (T_0^{*}) & q_2 (T_0^{*})\\
1 & q_1 (T_1^{*})-1 & q_2 (T_1^{*})\\
1 & q_1 (T_2^{*}) & q_2 (T_2^{*})-1
\end{bmatrix}
\cdot
\begin{bmatrix}
-y^{*}\\
x_1^{*}\\
x_2^{*}
\end{bmatrix}
\end{equation}
We denote the matrix used in equation \ref{eq:intersection formula} by $M$. \cite{golinaifv-m} proves that $M$ is invertible. We therefore have:
\begin{equation}
    -M^{-1}
\begin{bmatrix}
\cost(T_0^{*})\\
\cost(T_1^{*})\\
\cost(T_2^{*})
\end{bmatrix}
=
\begin{bmatrix}
-y^{*}\\
x_1^{*}\\
x_2^{*}
\end{bmatrix}
\end{equation}
We also have that:
\begin{align*}
M &=
    \begin{bmatrix}
1 & q_1 (T_0^{*}) & q_2 (T_0^{*})\\
1 & q_1 (T_1^{*})-1 & q_2 (T_1^{*})\\
1 & q_1 (T_2^{*}) & q_2 (T_2^{*})-1
\end{bmatrix} \\
&=
\begin{bmatrix}
1 & q_1 (T_0^{*}) & q_2 (T_0^{*})\\
1 & -q_0 (T_1^{*})-q_2 (T_1^{*}) & q_2 (T_1^{*})\\
1 & q_1 (T_2^{*}) & -q_0 (T_2^{*})-q_1 (T_2^{*})
\end{bmatrix}
\end{align*}
We know that $M^{-1}=\frac{adj M}{det M}$. We also know that each element in $M$ can be written in the format of $k\cdot 2^{-b}$ where $|k|\leq 2^{b}$. Using the Leibniz formula for determinants we see that $det M$ can be written as $2^{-2b}\cdot k$ where k is an integer and $|k|\leq 6\cdot 2^{2b}$.\\
Now consider the fact that each element in $adj M$ is $\pm1$ times the determinant of a $2\times 2$ sub matrix of $M$. We can use this fact to show that each element in $adj M$ can be written as  $2^{-2b}\cdot k^{\prime}$ where $k^{\prime}$ is an integer. Finally using $M^{-1}=\frac{adj M}{det M}$, we can show that each element of $M^{-1}$ can be written as  $\frac{k^{\prime}}{k}$ where $k$ and $k^{\prime}$ are integers and $|k|\leq6\cdot 2^{2b}$. Therefore $x_1^*$ and $x_2^*$ can be written as
\begin{equation}
    \frac{k_0^{\prime}}{k_0}\cdot \cost(T_0^{*})
    +\frac{k_1^{\prime}}{k_1} \cdot \cost(T_1^{*})+\frac{k_2^{\prime}}{k_2}\cdot \cost(T_2^{*})
\end{equation}
where $k_0^{\prime}$, $k_1^{\prime}$, $k_2^{\prime}$, $k_0$, $k_1$, $k_2$  are integers and $|k_0|$,$|k_1|$,$|k_2|$$\leq 6 \cdot 2^{2b}$ and we know that we can write $\cost(T_i^{*})$ as $\cost_i^{*} \cdot 2^{-b}$ where $\cost_i^{*}$ is an integer. We can now write $x_1^*$ and $x_2^*$ in the following format:
\begin{equation}
    \frac{k_0^{\prime}\cdot \cost_0^{*}}{k_0\cdot 2^b}
    +\frac{k_1^{\prime}\cdot \cost_1^{*}}{k_1\cdot 2^b}+\frac{k_2^{\prime}\cdot \cost_2^{*}}{k_2\cdot 2^b}
\end{equation}
This  means $x_1^*$ and $x_2^*$ can be written as $\frac{a_1}{b_1\cdot 2^b}$ and $\frac{a_2}{b_2\cdot 2^b}$ where $b_1$, $b_2$ $\leq6^3\cdot2^{6b}\leq2^{6b+8}$. This is because all of the terms used for $x_i^*$ have a common denominator of $2^{b}\cdot k_0\cdot k_1 \cdot k_2$.
\\
\newline
We will now show by contradiction that $\x^{*}$ lies on the hyperplane associated to $T_0^{\prime}$. If $\x^{*}$ doesn't lie on the hyperplane associated to $T_0^{\prime}$ then we know that the hyperplane associated to $T_0^{\prime}$ must be higher than the hyperplane associated to $T_0^{*}$ at $(x_1^{*},x_2^{*})$. We therefore have:
\begin{align*}
    &\cost(T_0^{\prime} )+q_1 (T_0^{\prime} ) x_1^*+q_2 (T_0^{\prime} ) x_2^*>\\ & \cost(T_0^* )+q_1 (T_0^* ) x_1^*+q_2 (T_0^* ) x_2^*.
\end{align*}
Thus,
\begin{align*}
\cost(T_0^{\prime} )-\cost(T_0^*)&+\big(q_1 (T_0^{\prime} )-q_1 (T_0^*)\big)x_1^*\\
&+\big(q_2 (T_0^{\prime} )-q_2 (T_0^*)\big) x_2^*>0
\end{align*}
Therefore,
\begin{align} \label{eq:type0-inequality}
\cost(T_0^{\prime} )-\cost(T_0^* )&+\big(q_1 (T_0^{\prime} )-q_1 (T_0^* )\big)\frac{a_1}{b_1} 2^{-b} \nonumber \\ 
&+\big(q_2 (T_0^{\prime} )-q_2 (T_0^* )\big) \frac{a_2}{b_2} 2^{-b}>0.
\end{align}
Hence,
\begin{align*}
\cost(T_0^{\prime} )-\cost(T_0^* ) &+\big(q_1 (T_0^{\prime} )-q_1 (T_0^* )\big)\cdot x_1^*\\&+\big(q_2 (T_0^{\prime} )-q_2 (T_0^* )\big)\cdot x_2^*\geq 2^{-(14b+16)}
\end{align*}
The last inequality comes from the fact that all the terms in equation (\ref{eq:type0-inequality}) have a common denominator of $2^{2b}\cdot b1 \cdot b2$ which is at most $2^{(14b+16)}$.\\
\newline
Since $T_0^{\prime}$ is the lowest type-0 hyperplane at $(x_1^{\prime},x_2^{\prime})$ we have:
\begin{equation}
\cost(T_0^{\prime} )+q_1 (T_0^{\prime} )\cdot x_1^{\prime}+q_2 (T_0^{\prime} )\cdot x_2^{\prime}\leq \cost(T_0^* )+q_1 (T_0^* )\cdot x_1^{\prime}+q_2 (T_0^* )\cdot x_2^{\prime}.
\end{equation}
Thus,
\begin{align*}
\cost(T_0^{\prime} )-\cost(T_0^* )+\big(q_1 (T_0^{\prime} )-q_1 (T_0^* )\big)x_1^{\prime}+\big(q_2 (T_0^{\prime} )-q_2 (T_0^* )\big) x_2^{\prime}\\
\leq 0.
\end{align*}
Therefore,
\begin{align*}
    \cost(T_0^{\prime} )-\cost(T_0^* )+\big(q_1 (T_0^{\prime} )-q_1 (T_0^* )\big) x_1^*+\big(q_2 (T_0^{\prime} )-q_2 (T_0^* )\big) x_2^*\\
    \leq \big(q_1 (T_0^{\prime} )-q_1 (T_0^* )\big) (x_1^*-x_1^{\prime} )+\big(q_2 (T_0^{\prime} )-q_2 (T_0^* )\big)(x_2^*-x_2^{\prime} )
\end{align*}
\begin{eqnarray}
\Longrightarrow   2^{-(14b+16)} &\leq&\big(q_1 (T_0^{\prime} )-q_1 (T_0^* )\big)\cdot (x_1^*-x_1^{\prime} ) \nonumber \\
&+&\big(q_2 (T_0^{\prime} )-q_2 (T_0^* )\big)\cdot(x_2^*-x_2^{\prime} )\nonumber \\ &\leq&|x_1^*-x_1^{\prime} |+|x_2^*-x_2^{\prime}|\nonumber\\&\leq& 2\cdot 2^{-(14b+18)}\nonumber \\&=& 2^{-(14b+17)}
\end{eqnarray}
This contradiction shows that $\x^{*}$ must lie on on the hyperplane associated to $T_0^{\prime}$. We can similarly show the same is true for $T_1^{\prime}$ and $T_2^{\prime}$. \hfill \qed

\end{document}